\begin{document}
\onecolumn
\title{Capacity Region of the Symmetric Injective $K$-User Deterministic Interference Channel}
\author{\IEEEauthorblockN{Mehrdad Kiamari and A. Salman Avestimehr}\\
\IEEEauthorblockA{Department of Electrical Engineering, \\University of Southern California\\
Emails: kiamari@usc.edu and avestimehr@ee.usc.edu}}

\maketitle
\begin{abstract}
We characterize the capacity region of the symmetric injective $K$-user Deterministic Interference Channel (DIC) for all channel parameters.
The achievable rate region is derived by first projecting the achievable rate region of Han-Kobayashi (HK) scheme, which is in terms of common and private rates for each user, along the direction of aggregate rates for each user (i.e., the sum of common and private rates). We then show that the projected region is characterized by only the projection  of those facets in the HK region for which the coefficient of common rate and private rate are the same for all users, hence simplifying the region. Furthermore, we derive a tight converse for each facet of the simplified achievable rate region.
\end{abstract}
\begin{IEEEkeywords}
Deterministic Interference Channel, Capacity Region.
\end{IEEEkeywords}
\section{Introduction} \label{sec1}

\IEEEPARstart{T}he deterministic interference channel (DIC), originally introduced in~\cite{elgamal}, represents a basic yet fruitful instance of interference channels that effectively captures the \emph{broadcast} and \emph{interference} phenomena in multi-user networks. For example, intuitions from the two-user DIC have lead to the capacity approximation of the two-user Gaussian interference channels in~\cite{etkin2008gaussian}. Further operational connections between the Gaussian interference channel and the two-user DIC are also established in~\cite{bresler2008two,avestimehr2015approximation}. However, despite its simplicity, characterizing the capacity region of the general $K$-user DIC has still remained an unsolved problem.

Our main result in this paper is to characterize the capacity region of the $K$-user DIC in a symmetric injective case. There have been several attempts at this problem in the past. In particular, the capacity region of the symmetric injective 3-user DIC has been characterized~\cite{jaafaar}. However, extending prior approaches to the general symmetric injective $K$-user DIC becomes extremely cumbersome due to the explosive growth in the number of parameters in both achievable schemes and the converse. To overcome this challenge, we propose new techniques in both the development of the achievable rate region and the converse.

For deriving the achievable rate region, we consider the general  Han-Kobayashi (HK) scheme~\cite{achiev}, in which the message of each user is into two parts: \emph{private message} which is supposed to be decoded only at the desired destination and \emph{common message} which is supposed to be decoded at all destinations. This scheme results in an achievable rate region that is in terms of \emph{common} and \emph{private} rates for the users. The challenge is then to eliminate the common and private rates and derive an achievable rate region that is in terms of the aggregate rates for the users (i.e., the sum of common and private rates). While a  Fourier-Moutzkin (FM) elimination method can be used to solve this problem for $K$-user DIC with small number of users (e.g. $K\leq 3$ as done in \cite{jaafaar}), applying FM method to networks with large number of users becomes extremely cumbersome. We overcome this challenge by directly projecting the achievable rate region of HK scheme along the direction of aggregate rates for the users, and exploiting the algebraic properties of the rate region to remove loose facets of the rate region. In particular, we show that the achievable rate region can be obtained by projecting \emph{only} those facets of achievable rate region of HK for which the coefficient of common rate and private rate are the same for all users.

We also derive a tight converse for each facet of the achievable rate region. In particular, we use the structure of the facets of the achievable rate region to systematically bound the mutual information between the transmit and receive signal of each user by the corresponding term in each facet.

The rest of the paper is organized as follows. We first explain the system model of the symmetric injective $K$-user DIC and state the main result in Section~\ref{sys_Thm}. We then elaborate upon the derivation of the achievable rate region in Section~\ref{ach}. Finally, in Section~\ref{converse}, we provide a tight converse for the achievable rate region.

\section{System Model and Main Result}\label{sys_Thm}
\newtheorem{rem}{Remark}

\begin{figure}
\centering
\includegraphics[trim =2.5in 4.05in 2.5in 0.5in, clip,width=0.4\textwidth]{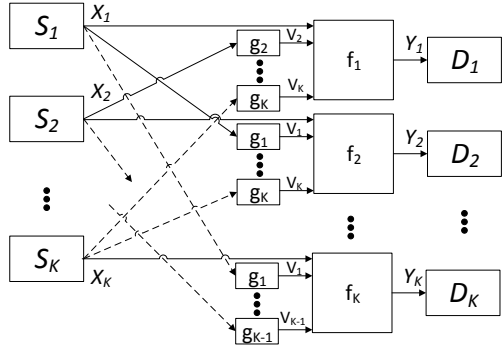}
\caption{The system model of the symmetric $K$-user DIC}
\label{sysmodel}
\end{figure}

The system model of the symmetric injective $K$-user DIC is shown in Fig.\ref{sysmodel}. In this model, source nodes and destination nodes are represented by $S_i$ and $D_i$ respectively, $\forall i\in [K]$ where $[K]\triangleq\{1,...,K\}$. Furthermore, the received signal at $D_i$, i.e. $Y_i$ is a deterministic function of transmitted signal $X_i$, where $X_i$ is a discrete random variable in finite set of alphabet $\mathcal X$, and interference signals $V_j$'s where $j\neq i$:
\begin{equation} \label{eq01a}
\begin{aligned}
V_i&=g_i(X_i),~~~~~~~~~~~~~~~~~~~~~~~~~~~~~~~\forall i \in [K],
\\Y_i&=f_i(X_i,V_{\Gamma_i})~~~\text{where}~{\Gamma_i}\triangleq[K]\backslash i,~~\forall i\in[K],
\end{aligned}
\end{equation}
for arbitrary function $g_i:\mathbb{R} \rightarrow \mathbb{R}$ and function $f_i:{\mathbb{R}}^{K} \rightarrow \mathbb{R}$,  $\forall i \in \{1,...,K\}$ such that the following conditions are satisfied
\begin{equation} \label{eq02a}
\begin{aligned}
H(Y_i|X_i)=H(V_{\Gamma_i})=\sum_{j\neq i}{H(V_j)}~~~~~~~\forall i\in [K],
\end{aligned}
\end{equation}
over all product distribution on $X_1,...,X_K$. This condition means $f_i$ is an invertible function given $X_i$, i.e.
\begin{equation} \label{eq02b}
\begin{aligned}
(V_1,V_2,...,V_{i-1},V_{i+1},...,V_K)=h_i(X_i,Y_i)~~\forall i \in [K],
\end{aligned}
\end{equation}
for some function $h_i(.)$, $\forall i\in[K]$. We refer to interference channel that satisfies this property as \emph{injective} deterministic interference channel.

\begin{rem}
\label{remark-sysmodel-1}
The invertibility restriction of $f_i(.)$s was originally imposed in \cite{elgamal} for the 2-user DIC and later generalized for the symmetric injective 3-user DIC in \cite{jaafaar}.
\end{rem}
\begin{rem}
\label{remark-sysmodel-2}
The above model is ``symmetric'' since transmitter $i$ causes the same interference $V_i=g_i(X_i)$ to all receivers.
\end{rem}

Based on the above model, we state our main result of this paper, which is the capacity characterization of the symmetric injective $K$-user DIC.

\newtheorem{theorem}{Theorem}
\begin{theorem}\label{theory-main}
The capacity region of the symmetric injective $K$-user deterministic interference channel is characterized as $\cup_{\prod_{i=1}^{K}{p(x_i)}}{\mathcal A}_{p(x_1),\dots,p(x_K)}$ where
\begin{equation} \label{eq03a}
\begin{aligned}
{\mathcal A}_{p(x_1),\dots,p(x_K)}&\triangleq \Big\{(R_1,...,R_K)|R_i \geq 0
 ,~~\textmd{$\forall i \in [K]$},~~ \sum_{i=1}^{K}{a_iR_i} \leq \sum_{i=1}^{K}{\sum_{j=1}^{a_i}{H(Y_i|V_{{\mathcal S}_{i,j}^c})}},\textmd{$a_i \in \mathbb{Z}^{\geq 0}$ and ${\mathcal S}_{i,j} \subseteq [K]$ },
\\&\text{satisfying}\sum_{i=1}^{K}{\sum_{j=1}^{a_i}{I_{{\mathcal S}_{i,j}}(m)=a_m}} ~~~\forall 1\leq m \leq K \Big\},
\end{aligned}
\end{equation}
and
\begin{equation} \label{eq0203}
\begin{aligned}
I_{{\mathcal S}_{i,j}}(m) &\triangleq \begin{cases} 1 &\mbox{if } m \in {\mathcal S}_{i,j} \\
0 & \mbox{otherwise }
\end{cases}
\\{\mathcal S}_{i,j}^c &\triangleq [K] \backslash {\mathcal S}_{i,j}.
\end{aligned}
\end{equation}
\end{theorem}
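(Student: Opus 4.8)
The plan is to prove the two directions of Theorem~\ref{theory-main} separately: an achievability part showing that every rate tuple in $\mathcal{A}_{p(x_1),\dots,p(x_K)}$ is achievable (for a fixed product distribution), and a converse part showing that no rate tuple outside this region is achievable. For achievability, I would start from the Han--Kobayashi scheme applied to the $K$-user DIC, in which each message $W_i$ is split into a private part of rate $\rho_i$ and a common part of rate $\kappa_i$, giving an inner bound described by a polytope in the $2K$ variables $(\rho_i,\kappa_i)$. The goal is to project this polytope onto the aggregate-rate coordinates $R_i = \rho_i + \kappa_i$. Rather than performing a brute-force Fourier--Motzkin elimination (which blows up combinatorially in $K$), I would argue directly: a supporting hyperplane of the projected region corresponds to maximizing $\sum_i c_i R_i = \sum_i c_i(\rho_i+\kappa_i)$ over the HK polytope, i.e.\ the linear functional has \emph{equal} coefficients on the private and common rate of each user. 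So I only need to understand which facets of the HK polytope are tight at the optimum of such ``balanced'' functionals. Exploiting the symmetry and the deterministic (injective) structure — in particular the entropy identity \eqref{eq02a} and its invertibility consequence \eqref{eq02b} — one can identify these active facets and show their projection is exactly the family of inequalities $\sum_i a_i R_i \le \sum_i \sum_{j=1}^{a_i} H(Y_i \mid V_{\mathcal{S}_{i,j}^c})$ subject to the combinatorial balancing constraint $\sum_i \sum_j I_{\mathcal{S}_{i,j}}(m) = a_m$ for all $m$. This balancing constraint is precisely the bookkeeping that records how each user's common rate (decoded everywhere) gets counted across the various receivers' decoding constraints.

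For the converse, I would take any achievable $(R_1,\dots,R_K)$, fix one of the defining inequalities of $\mathcal{A}$ — specified by the integers $a_i$ and the sets $\mathcal{S}_{i,j}$ satisfying the balancing constraint — and produce a genie-aided bound matching it. The idea is to apply Fano's inequality to each $\mathsf{n}$-letter block and then, for each $i$ and each $j=1,\dots,a_i$, bound one ``copy'' of $R_i$ by $\frac{1}{\mathsf{n}}I(W_i; Y_i^{\mathsf{n}} \mid \text{side information}) \le \frac{1}{\mathsf{n}}H(Y_i^{\mathsf{n}} \mid V_{\mathcal{S}_{i,j}^c}^{\mathsf{n}}, \dots)$, where the side information given to receiver $i$ in its $j$-th copy consists of the interference signals $V_m^{\mathsf{n}}$ for $m \in \mathcal{S}_{i,j}^c$ (equivalently, the messages $W_m$ for those $m$). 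Summing over all $i$ and $j$ gives $\sum_i a_i R_i$ on the left. On the right, one must telescope the resulting sum of conditional entropies: using the injectivity property \eqref{eq02b}, decoding $W_i$ from $(X_i^{\mathsf{n}}, Y_i^{\mathsf{n}})$ recovers all the interfering $V_m^{\mathsf{n}}$, which lets the $H(Y_i^{\mathsf{n}}\mid \cdot)$ terms be re-expressed and cancelled against entropy terms appearing in other users' bounds. The balancing constraint $\sum_i\sum_j I_{\mathcal{S}_{i,j}}(m) = a_m$ is exactly what guarantees that every $V_m^{\mathsf{n}}$ term gets introduced and removed the right number of times, so that the chain of inequalities closes up to $\sum_i\sum_{j=1}^{a_i} H(Y_i\mid V_{\mathcal{S}_{i,j}^c})$ in the single-letter (i.i.d.) form after a standard time-sharing/convexity step, letting $\mathsf{n}\to\infty$.

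The main obstacle I anticipate is the achievability projection step: showing that the projected HK region is \emph{exactly} the polytope $\mathcal{A}$, with no additional (tighter) inequalities surviving and none of the listed inequalities being redundant in a way that changes the region. Concretely, one has to prove that for every ``balanced'' cost vector $(c_i)$, the maximum of $\sum_i c_i R_i$ over the projected region equals the right-hand side obtained from some valid choice of $(a_i, \mathcal{S}_{i,j})$ — this requires a careful analysis of the vertices/faces of the HK polytope and an argument (likely using the symmetry of the channel and a combinatorial/LP-duality argument on the balancing constraint) that the optimal dual solution always has the structure encoded by \eqref{eq03a}. A secondary technical point is handling the union over all input product distributions cleanly: the converse must produce, for a given code, a \emph{single} product distribution under which all the matching bounds hold simultaneously, which is the usual ``identify the time-sharing variable and the induced marginals'' maneuver but needs care because the $H(Y_i\mid V_{\mathcal{S}_{i,j}^c})$ terms for different $(i,j)$ all refer to the same distribution. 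I would structure the write-up so that the achievability lives in Section~\ref{ach} and the converse in Section~\ref{converse}, exactly as the introduction promises.
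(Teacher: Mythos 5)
Your converse is essentially the paper's argument in a different dress: the paper applies Fano, writes $a_iI(X_i^n;Y_i^n)=a_iH(Y_i^n)-a_i\sum_{j\neq i}H(V_j^n)$ using the injectivity identity \eqref{eq02a}, and then uses the balancing constraint $\sum_{i,j}I_{{\mathcal S}_{i,j}}(m)=a_m$ purely as a bookkeeping device to regroup the $-H(V_m^n)$ terms before invoking $H(Y)-H(V_{\mathcal S})\leq H(Y|V_{\mathcal S})$; your genie-aided version (handing $V^n_{{\mathcal S}_{i,j}^c}$ to receiver $i$ in its $j$-th copy) reaches the same single-letter bound and is fine, though there is no actual cross-user cancellation needed — the leftover $-H(V_m^n)$ terms can simply be dropped. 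That half of your plan would go through.

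The achievability half has a genuine gap, and it sits exactly where you flagged it. Your observation that the support function of the projected region is the maximum of a ``balanced'' functional $\sum_i c_i(\rho_i+\kappa_i)$ is correct, but it does not imply that the relevant dual certificates (nonnegative combinations of the HK inequalities) have equal aggregated coefficients on $R_{mp}$ and $R_{mc}$: the dual constraints are inequalities $d_m\geq c_m$, $e_m\geq c_m$, not equalities, so an optimal dual solution may a priori be unbalanced. The paper's main technical work (Lemma~\ref{lemext}) is precisely to close this: first, Lemma~\ref{proj_min} shows a general combination with coefficients $(d_m,e_m)$ projects to $\sum_m\min(d_m,e_m)R_m\leq\theta$; then a two-step coefficient--mass-transfer argument shows any unbalanced combination is dominated by a balanced one with the same left-hand side and no larger right-hand side. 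Step 1 (reducing $e_m$ down to $d_m$) moves weight from sets containing $m$ to the same sets with $m$ removed and only needs ``conditioning reduces entropy,'' but Step 2 (reducing $d_m$ down to $e_m$) is \emph{not} a consequence of symmetry or LP duality alone — it needs the injective structure, via $H(Y_m|X_mV_{{\mathcal S}^c})=H(V_{\mathcal S})=\sum_{m'\in{\mathcal S}}H(V_{m'})$ and the independence of the $V_{m'}$'s, to show that the entropy removed from user $m$'s terms is exactly compensated by the $H(V_{m'})$ gained when augmenting other users' sets. Without an argument of this kind your plan only establishes that ${\mathcal A}$ is an outer bound on the projected HK region along balanced directions, not that it is achieved; supplying Step 2 (or an equivalent LP-duality certificate that explicitly invokes \eqref{eq02a}) is the missing piece.
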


\begin{rem}
\label{remark1}
As a special case, Theorem \ref{theory-main} recovers the capacity region of the 2-user DIC of El Gamal-Costa \cite{elgamal}, by choosing $a_i$'s and ${\mathcal S}_{i,j}$'s in Table I. One can easily note that all other choices of $a_i$'s and ${\mathcal S}_{i,j}$'s not considered in the table result in redundant bounds.
\end{rem}
\begin{table}[ht]
\centering 
\begin{tabular}{|c|c|c|} 
\hline 
$a_i$'s & ${\mathcal S}_{i,j}$'s & Bound  \\ 
\hline 
$a_1=1$&${\mathcal S}_{1,1}=\{1\}$& (7) of \cite{elgamal}\\ 
$a_2=1$&${\mathcal S}_{2,1}=\{2\}$& (8) of \cite{elgamal}\\ 
$a_1=1,a_2=1$&${\mathcal S}_{1,1}=\{\},{\mathcal S}_{2,1}=\{1,2\}$& (9.a) of \cite{elgamal}\\
$a_1=1,a_2=1$&${\mathcal S}_{1,1}=\{1,2\},{\mathcal S}_{2,1}=\{\}$& (9.b) of \cite{elgamal}\\ 
$a_1=1,a_2=1$&${\mathcal S}_{1,1}=\{2\},{\mathcal S}_{2,1}=\{1\}$ &  (9.c) of \cite{elgamal}\\ 
$a_1=2,a_2=1$&${\mathcal S}_{1,1}=\{1,2\},{\mathcal S}_{1,2}=\{\},{\mathcal S}_{2,1}=\{1\}$& (10) of \cite{elgamal}\\ 
$a_1=1,a_2=2$&${\mathcal S}_{1,1}=\{2\},{\mathcal S}_{2,1}=\{1,2\},{\mathcal S}_{2,2}=\{\}$& (11) of \cite{elgamal}\\ 
\hline 
\end{tabular}
\label{table-2user} 
\caption{Table I: $a_i$'s and ${\mathcal S}_{i,j}$'s for the case $K=2$}
\end{table}

\begin{rem}
\label{remark2}
As a special case, Theorem \ref{theory-main} also recovers the capacity region of the symmetric injective 3-user DIC \cite{jaafaar}. The choice of $a_i$s and ${\mathcal S}_{i,j}$ are illustrated in Table II in Appendix C.
\end{rem}
\begin{rem}
\label{converse_gcs_thm1}
In \cite{allerton_gcs} it was shown that for the 2-user and 3-user DIC, a tight converse can be derived by applying the Generalized Cut-Set (GCS) bound \cite{gcs} on an appropriately designed ``extended network''. To develop the converse for Theorem 1, we instead use a direct method to systematically bound the mutual information between the transmit and receive signal of each user by the corresponding term in each facet of (\ref{eq03a}).
\end{rem}

\section{Achievability}\label{ach}
To prove Theorem \ref{theory-main}, we fix the product distribution $\prod_{i=1}^{K}{p(x_i)}$ and show that the region $\mathcal A\triangleq {{\mathcal A}_{p(x_1),\dots,p(x_K)}}$ is achievable.
We first start by considering the achievable rate region of HK scheme as described below.

\noindent \textbf{Codeword Generation:} Consider a product distribution ${\prod_{i=1}^K{p_{X_i}(x_i)}}$ and the corresponding $\prod_{i=1}^K{p_{V_i}(v_i)}$ (note that $V_i=g_i(X_i)$). $2^{nR_{ic}}$ independent codewords of length $n$, denoted by $V_{i}^n(c_i)$ where $c_i \in \{1,2,...,2^{nR_{ic}}\}$, are generated according to $\prod_{k=1}^n{p_{V_i}(v_{ik})}$ by transmitter $i$ for $i=1,...,K$. Then, $2^{nR_{ip}}$ independent codewords of length $n$, denoted by $X_{i}^n(c_i,p_i)$, are generated according to $\prod_{k=1}^n{p_{X_i|V_i}(x_{ik}|v_{ik})}$ for each codeword $V_{i}^n(c_i)$. Now, transmitter $i$ sends $X_{i}^n(c_i,p_i)$ as the corresponding coded signal of message with index $(c_i,p_i)$.

\noindent \textbf{Decoding:} For $i=1,...,K$, receiver $i$ tries to find a unique $(\hat {c_i},\hat {p_i})$ and $(\hat {c_1},...,\hat {c_{i-1}},\hat {c_{i+1}},...,\hat {c_K})$ satisfying
\small
\begin{equation} \label{eq04a}
\begin{aligned}
(X_{i}^n(\hat {c_i},\hat {p_i}),V_{1}^n(\hat {c_1}),...,V_{K}^n(\hat {c_K}),Y_i^n) \in A_{n}^{\epsilon}(X_i,V_1,..,V_K,Y_i).
\end{aligned}
\end{equation}
\normalsize

\noindent \textbf{Error Probability:} As we show in Appendix \ref{error_com_priv}, the error probability for all receivers goes to zero as $n$ increases for any rate tuple $(R_{1p},R_{1c},...,R_{Kp},R_{Kc})$ that is in the following region.
\small
\begin{equation} \label{eq05a}
\begin{aligned}
{\mathcal A}_1 \triangleq \Big\{&(R_{1p},R_{1c},...,R_{Kp},R_{Kc})|R_{ip},R_{ic} \geq 0, ~R_{ip}+\sum_{k \in {\mathcal M}_{i,j}}{R_{kc}}\leq H(Y_i|V_{{\mathcal M}_{i,j}^c})~~\text{for all},
~\text{${\mathcal M}_{i,j} \subseteq [K]$, $\forall j\in [2^K]$, and $\forall i\in [K]$}\Big\},
\end{aligned}
\end{equation}
\normalsize
where ${\mathcal M}_{i,j}$ is the $j$th subset of $\{1,...,K\}$ for the corresponding $i$.

Hence, by considering the aggregate  rates  for  each  user (i.e.,  the  sum  of  common  and  private  rates), we achieve the following rate region.
\small
\begin{equation} \label{eqa1}
\begin{aligned}
{\mathcal A}_2&\triangleq \Big\{{(R_1,...,R_K)|R_i=R_{ic}+R_{ip}~\text{, $~~ \forall i\in [K]$}}
~\text{where}~ (R_{1p},R_{1c},...,R_{Kp},R_{Kc}) \in {\mathcal A}_1\Big\}.
\end{aligned}
\end{equation}
\normalsize
 It is clear that region ${\mathcal A}_2$ can be obtained by projecting region ${\mathcal A}_1$ according to the following linear transformation matrix ${\bf A}_{K \times 2K}$
where
\begin{equation} \label{eqloosebound}
\begin{aligned}
[{\bf A}]_{ij} =
\left\{
	\begin{array}{ll}
		1  & \mbox{if } j=2i~\text{or}~j=2i-1, \\
		0 & \mbox{Otherwise.}
	\end{array}
\right.
\end{aligned}
\end{equation}

Based on this projection, we now claim that the achievable rate region  ${\mathcal A}_2$ can be characterized as the following.
\newtheorem{lem}{Lemma}
\begin{lem}
\label{lemext}
Region ${\mathcal A}_2$ is equivalent to following region
\small
\begin{equation} \label{eqa2}
\begin{aligned}
{\mathcal A}_3 \triangleq \Big\{ &(R_1,...,R_K)|\sum_{i=1}^K{a_iR_i}\leq \sum_{i=1}^K{\sum_{j=1}^{2^K}{c_{i,j}H(Y_i|V_{{\mathcal M}_{i,j}^c})}}, R_i \geq 0~\text{for all $i \in \{1,...,K\}$, $c_{i,j} \in \mathbb{Z}^{\geq 0}$, and ${\mathcal M}_{i,j} \subseteq [K]$},
\\&\text{satisfying}~ a_m=\sum_{j=1}^{2^K}{c_{m,j}}=\sum_{i=1}^{K}{\sum_{j=1}^{2^K}{c_{i,j}I_{{\mathcal M}_{i,j}}(m)}}, \forall m\in [K] \Big\}.
\end{aligned}
\end{equation}
\normalsize
\end{lem}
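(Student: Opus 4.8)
The plan is to establish the two set inclusions $\mathcal A_2\subseteq\mathcal A_3$ and $\mathcal A_3\subseteq\mathcal A_2$ separately; the first is essentially a one-line computation and the second is where all the work lies. For $\mathcal A_2\subseteq\mathcal A_3$ I would take $(R_1,\dots,R_K)\in\mathcal A_2$, fix any representation $R_i=R_{ip}+R_{ic}$ with $(R_{1p},R_{1c},\dots,R_{Kp},R_{Kc})\in\mathcal A_1$, and, for any nonnegative integers $c_{i,j}$ and subsets $\mathcal M_{i,j}\subseteq[K]$ obeying the balance condition $a_m=\sum_j c_{m,j}=\sum_{i,j}c_{i,j}I_{\mathcal M_{i,j}}(m)$, multiply the inequality $R_{ip}+\sum_{k\in\mathcal M_{i,j}}R_{kc}\le H(Y_i|V_{\mathcal M_{i,j}^c})$ of $\mathcal A_1$ by $c_{i,j}$ and sum over $i,j$. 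On the left the coefficient of $R_{ip}$ is $\sum_j c_{i,j}=a_i$ and that of $R_{kc}$ is $\sum_{i,j}c_{i,j}I_{\mathcal M_{i,j}}(k)=a_k$, so the left side collapses to $\sum_i a_i(R_{ip}+R_{ic})=\sum_i a_iR_i$, giving exactly the facet $\sum_i a_iR_i\le\sum_{i,j}c_{i,j}H(Y_i|V_{\mathcal M_{i,j}^c})$ of $\mathcal A_3$; since also $R_i\ge 0$, we get $(R_1,\dots,R_K)\in\mathcal A_3$.

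For the reverse inclusion $\mathcal A_3\subseteq\mathcal A_2$ I would first obtain an explicit inequality description of the projection $\mathcal A_2={\bf A}\,\mathcal A_1$ using the standard projection (Fourier--Motzkin) lemma for polyhedra: $(R_1,\dots,R_K)\in\mathcal A_2$ iff it satisfies every inequality obtained by combining the constraints of $\mathcal A_1$ with nonnegative multipliers $c_{i,j}$ on the $K 2^{K}$ Han--Kobayashi inequalities and $s_i,t_i$ on the sign constraints $R_{ip}\ge 0,\ R_{ic}\ge 0$, in such a way that the combined inequality descends through ${\bf A}$, i.e.\ the coefficient of $R_{ip}$ equals the coefficient of $R_{ic}$ for each $i$. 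Writing out that common value $\nu_i$, one gets $\nu_i=\sum_j c_{i,j}-s_i=\sum_{i',j}c_{i',j}I_{\mathcal M_{i',j}}(i)-t_i$, and hence (eliminating the slacks $s_i,t_i\ge 0$, and noting that negative $\nu_i$ are redundant over the orthant) $\mathcal A_2$ is exactly the set of $R\ge 0$ satisfying $\sum_i\nu_iR_i\le\sum_{i,j}c_{i,j}H(Y_i|V_{\mathcal M_{i,j}^c})$ for all $c_{i,j}\ge 0$ and all $0\le\nu_i\le\min\!\bigl(p_i(c),\,q_i(c)\bigr)$, where $p_i(c):=\sum_j c_{i,j}$ and $q_i(c):=\sum_{i',j}c_{i',j}I_{\mathcal M_{i',j}}(i)$. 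The region $\mathcal A_3$ keeps only the ``balanced'' subfamily $p_i(c)=q_i(c)$ (with $\nu_i$ equal to this common value), so $\mathcal A_3\subseteq\mathcal A_2$ is equivalent to the statement that every ``unbalanced'' bound is implied, over $R\ge 0$, by balanced ones.

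To prove that, fix $c\ge 0$ and $0\le\nu_i\le\min(p_i(c),q_i(c))$; it suffices to exhibit a balanced multiplier $c'$ with $p_i(c')=q_i(c')\ge\nu_i$ for every $i$ and $\sum_{i,j}c'_{i,j}H(Y_i|V_{\mathcal M_{i,j}^c})\le\sum_{i,j}c_{i,j}H(Y_i|V_{\mathcal M_{i,j}^c})$, since then $\sum_i\nu_iR_i\le\sum_i p_i(c')R_i\le\sum_{i,j}c'_{i,j}H(\cdot)\le\sum_{i,j}c_{i,j}H(\cdot)$ for every $R\in\mathcal A_3$. I would construct $c'$ from $c$ by weight moves. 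First, whenever a message $m$ is over-represented, $q_m(c)>p_m(c)$, transfer weight from a constraint $(i,\mathcal M_{i,j})$ with $m\in\mathcal M_{i,j}$ to $(i,\mathcal M_{i,j}\setminus\{m\})$: this leaves every $p_n$ and every $q_n$ with $n\ne m$ unchanged, lowers $q_m$ toward $p_m$, and does not increase the right-hand side because $\mathcal M_{i,j}\setminus\{m\}\subsetneq\mathcal M_{i,j}$ forces $H(Y_i|V_{(\mathcal M_{i,j}\setminus\{m\})^c})\le H(Y_i|V_{\mathcal M_{i,j}^c})$ (conditioning on more $V$'s does not increase entropy); iterating drives the configuration to one with $q_n\le p_n$ for all $n$, while the inequalities $p_n\ge\nu_n$ and $q_n\ge\nu_n$ are preserved. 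It then remains to remove the slack $p_m(c)>q_m(c)$ by deleting weight from receiver-$m$ constraints, which only decreases the right-hand side since the entropy terms are nonnegative.

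The delicate point, and the step I expect to be the main obstacle, is this last reduction: deleting weight from a constraint $(m,\mathcal M)$ also lowers $q_n$ for $n\in\mathcal M$, so one must delete in a coordinated fashion that restores $p_n=q_n$ simultaneously for all $n$ without pushing any $p_n$ (equivalently $q_n$) below $\nu_n$; concretely one needs the feasibility of the system $0\le x_{m,\mathcal M}\le c_{m,\mathcal M}$ with $\sum_{\mathcal M}x_{m,\mathcal M}-\sum_{i,\mathcal M:\,m\in\mathcal M}x_{i,\mathcal M}=p_m(c)-q_m(c)$ and $p_m(c)-\sum_{\mathcal M}x_{m,\mathcal M}\ge\nu_m$ for all $m$. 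I would prove this feasibility by an LP-duality / max-flow--min-cut (or Hall-type) argument on the bipartite incidence between receivers and messages, exploiting that after the first phase all imbalances $p_n(c)-q_n(c)$ are nonnegative and that $\nu_n\le\min(p_n(c),q_n(c))$ is maintained throughout; a routine rational-scaling remark then bridges between the real multipliers produced by Fourier--Motzkin and the integer $c_{i,j}$ appearing in the definition of $\mathcal A_3$.
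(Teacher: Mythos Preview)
Your overall strategy and your first phase (reducing the over-represented case $q_m>p_m$ by shifting weight from $(i,\mathcal M)$ to $(i,\mathcal M\setminus\{m\})$) are correct and coincide with the paper's Step~1. The gap is in your second phase: pure deletion cannot in general balance the configuration while keeping $p_m(c')\ge\nu_m$, so the LP you wrote down can be infeasible and no max-flow/Hall argument will rescue it.

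Concretely, take $K=2$ with $c_{1,\emptyset}=1$, $c_{2,\{1\}}=1$ and all other $c_{i,j}=0$. Then $p_1=q_1=1$, $p_2=1$, $q_2=0$, so phase~1 is vacuous and $\nu_1=1$ is admissible. Your balance equation for $m=2$ forces $x_{2,\{1\}}=1$, and the equation for $m=1$ then forces $x_{1,\emptyset}=1$, giving $p_1(c')=0<\nu_1$. Thus the unbalanced bound $R_1\le H(Y_1|V_1,V_2)+H(Y_2|V_2)$ is not dominated by any balanced bound obtainable through deletion alone. It \emph{is} dominated by the balanced bound $R_1\le H(Y_1|V_2)$ (take $c'_{1,\{1\}}=1$), but verifying $H(Y_1|V_2)\le H(Y_1|V_1,V_2)+H(Y_2|V_2)$ requires $H(Y_2|V_2)\ge H(V_1)$, i.e.\ the injectivity hypothesis~(\ref{eq02a}).

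This is exactly what the paper's Step~2 exploits: rather than only deleting weight at receiver $m$, it simultaneously \emph{moves} weight at the other receivers $m'$ from sets not containing $m'$ to sets containing $m'$ (the $\beta\to\mu$ transfer), which raises $q_{m'}$ back to compensate for the loss caused by the deletion. The proof that this compound move does not increase the right-hand side (Claim~6) hinges on $H(Y_m|V_m,V_{\mathcal S^c})\ge H(V_{\mathcal S})$, which is precisely the injective/symmetric structure of the channel. Your purely combinatorial plan ignores this structure, and the counterexample shows it is genuinely needed.
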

\begin{proof}[{Proof}]
As mentioned earlier, region ${\mathcal A_2}$ can be obtained by projecting region ${\mathcal A}_1$ based on linear transformation $\bf A$. Since region ${\mathcal A}_1$ is polyhedra, region ${\mathcal A_2}$ can be found by projecting all facets of ${\mathcal A}_1$ according to the projection matrix $\bf A$.

Note that the facets of region ${\mathcal A}_1$ are obtained by linear combinations of the inequalities characterizing this region. Hence, according to (\ref{eq05a}), all possible facets of ${\mathcal A}_1$ can be written as follows.
\begin{equation} \label{general}
\begin{aligned}
&\sum_{i=1}^K{\sum_{j=1}^{2^K}{{c}_{i,j}(R_{ip}+\sum_{k \in {\mathcal M}_{i,j}}{R_{kc}})}}~~\leq \sum_{i=1}^K{\sum_{j=1}^{2^K}{{c}_{i,j}H(Y_i|V_{{\mathcal M}_{i,j}^c})}},
\end{aligned}
\end{equation}
for all ${\mathcal M}_{i,j} \subseteq [K]$ and ${c}_{i,j}\in {\mathbb{R}}_{\geq 0}$  ($i=1,...,K$ and $j=1,...,2^K$), where ${c}_{i,j}$ is the corresponding coefficient of the inequality in (\ref{eq05a}) with subset ${\mathcal M}_{i,j}$.


%

By simplifying the left-hand side of (\ref{general}), we have
\begin{equation} \label{generalnnw}
\begin{aligned}
&\sum_{i=1}^K{\sum_{j=1}^{2^K}{{c}_{i,j}(R_{ip}+\sum_{k \in {\mathcal M}_{i,j}}{R_{kc}})}}
\\&=\sum_{i=1}^K{\sum_{j=1}^{2^K}{{c}_{i,j}R_{ip}}}+\sum_{i=1}^K \sum_{j=1}^{2^K}  \sum_{k=1}^K {c}_{i,j} I_{{\mathcal M}_{i,j}}(k)  R_{kc}
\\&=\sum_{i=1}^K{\sum_{j=1}^{2^K}{{c}_{i,j}R_{ip}}}+ \sum_{k=1}^K \sum_{i=1}^K \sum_{j=1}^{2^K}  {c}_{i,j} I_{{\mathcal M}_{i,j}}(k)  R_{kc}
\\&=\sum_{m=1}^K{\sum_{j=1}^{2^K}{{c}_{m,j}R_{mp}}}+ \sum_{m=1}^K \sum_{i=1}^K \sum_{j=1}^{2^K}  {c}_{i,j} I_{{\mathcal M}_{i,j}}(m)  R_{mc}
 \\&=\sum_{m=1}^K{d_mR_{mp}}+\sum_{m=1}^K{e_mR_{mc}},
\end{aligned}
\end{equation}
where
\begin{equation} \label{eq2conditions2}
\begin{aligned}
d_m&\triangleq\sum_{j=1}^{2^K}{c_{m,j}}~~,~~e_m&\triangleq \sum_{i=1}^{K}{\sum_{j=1}^{2^K}{c_{i,j}I_{{\mathcal M}_{i,j}}(m)}},~~\forall m\in [K].
\end{aligned}
\end{equation}

Thus, all facets of region ${\mathcal A}_1$  can now be written as
\begin{equation} \label{general2}
\begin{aligned}
&\sum_{i=1}^K d_i R_{ip} + \sum_{i=1}^K e_i R_{ic}  ~~\leq \sum_{i=1}^K{\sum_{j=1}^{2^K}{{c}_{i,j}H(Y_i|V_{{\mathcal M}_{i,j}^c})}},
\end{aligned}
\end{equation}
for all ${\mathcal M}_{i,j} \subseteq [K]$, ${c}_{i,j}\in {\mathbb{R}}_{\geq 0}$, and $d_i$ and $e_i$ defined in (\ref{eq2conditions2}). Now note that, according to Lemma \ref{proj_min} proved in Appendix \ref{app_proj_min}, the projection of (\ref{general2}) according to the linear transformation matrix $\bf A$ would result in the following bound
\begin{equation} \label{generalproj}
\begin{aligned}
&\sum_{i=1}^K{\min(d_i,e_i)R_{i}}~~\leq \sum_{i=1}^K{\sum_{j=1}^{2^K}{{c}_{i,j}H(Y_i|V_{{\mathcal M}_{i,j}^c})}}.
\end{aligned}
\end{equation}
Therefore, region ${\mathcal A_2}$ that is obtained by projection of region ${\mathcal A}_1$ based on linear transformation $\bf A$, is characterized as
\begin{equation} \label{generalproj2}
\begin{aligned}
\Big\{&(R_1,...,R_K)|\sum_{i=1}^K{a_iR_i}\leq \sum_{i=1}^K{\sum_{j=1}^{2^K}{c_{i,j}H(Y_i|V_{{\mathcal M}_{i,j}^c})}}, R_i \geq 0
~\text{for all $i \in \{1,...,K\}$, $c_{i,j} \in \mathbb{Z}^{\geq 0}$,  ${\mathcal M}_{i,j} \subseteq [K]$},
\\&\text{and $a_i=\min(d_i,e_i)$, where $d_i$ and $e_i$  are defined in (\ref{eq2conditions2})} \Big\}.
\end{aligned}
\end{equation}

Now, note that region ${\mathcal A}_3$ is exactly the same as the above region, except further restricting to the choice of $c_{i,j}$'s to satisfy
\begin{equation}
\label{eq:equalityConstraint}
a_m=d_m=e_m, ~\forall m\in [K].
\end{equation}
To complete the proof, we only need to show that any inequality in (\ref{generalproj2}) that does not satisfy the constraint of (\ref{eq:equalityConstraint}) is redundant, meaning that it can be obtained by linear combination of inequalities already considered in region ${\mathcal A}_3$. Let us consider inequalities that can be written as follows
\begin{equation} \label{ref_non_eq_coef}
\begin{aligned}
&\sum_{i=1}^K{\sum_{j=1}^{2^K}{{c}_{i,j}(R_{ip}+\sum_{k \in {\mathcal M}_{i,j}}{R_{kc}})}}~~\leq \sum_{i=1}^K{\sum_{j=1}^{2^K}{{c}_{i,j}H(Y_i|V_{{\mathcal M}_{i,j}^c})}},
\end{aligned}
\end{equation}
where
\begin{equation} \label{eq2conditionsoooo}
\begin{aligned}
\sum_{j=1}^{2^K}{c_{m,j}}&\neq\sum_{i=1}^{K}{\sum_{j=1}^{2^K}{c_{i,j}I_{{\mathcal M}_{i,j}}(m)}}&~~~\text{for some}~~m.
\end{aligned}
\end{equation}

We now demonstrate the following two consecutive steps to find an inequality in region ${\mathcal A}_2$ such that it satisfies (\ref{eq:equalityConstraint}) and its projection results in a tighter bound than the projection of (\ref{ref_non_eq_coef}). In the first step, we find an inequality such that the coefficient of private rate is greater than or equal to the coefficient of common rate for all users and its projection according to transformation matrix $\bf A$ results in a tighter bound on $\sum_{k=1}^K{\min(d_k,e_k)R_{k}}$ compared to the projection of (\ref{ref_non_eq_coef}).  In the second step, we obtain another inequality, based on the resulting inequality of the first step, such that the coefficient of private rate and common rate are the same for all users while its projection leads to a tighter bound on $\sum_{k=1}^K{\min(d_k,e_k)R_{k}}$ compared to the projection of (\ref{ref_non_eq_coef}).
\\

\noindent{\bf Step 1:} In this step, we aim to find new inequality
\small
\begin{equation} \label{ach_1}
\begin{aligned}
&\sum_{i=1}^K{\sum_{j=1}^{2^K}{{\tilde c}_{i,j}(R_{ip}+\sum_{k \in {\mathcal M}_{i,j}}{R_{kc}})}}~~\leq \sum_{i=1}^K{\sum_{j=1}^{2^K}{{\tilde c}_{i,j}H(Y_i|V_{{\mathcal M}_{i,j}^c})}},
\end{aligned}
\end{equation}
\normalsize
such that $a)$
\small
\begin{equation} \label{condition_big_ineq_a}
\begin{aligned}
{\tilde d}_{m}&={d}_{m}~~,~~{\tilde e}_{m}&= \min{(e_{m},d_{m})}~,~\forall m \in [K],
\end{aligned}
\end{equation}
\normalsize
where ${\tilde d}_{m}$ and ${\tilde e}_{m}$ are similarly defined as (\ref{eq2conditions2}) and $b)$ the projection of (\ref{ach_1}), according to transformation matrix ${\bf A}$, results in a tighter bound on $\sum_{k=1}^K{\min(d_k,e_k)R_{k}}$ compared to the projection of (\ref{ref_non_eq_coef}).

Let us consider a $m$ where $e_m>d_m$ and define
\small
\begin{equation} \label{def_JNa}
\begin{aligned}
~~~{\mathcal J}_c &\triangleq \{(i,j)|i \in [K]\backslash m,j\in[2^K],~\text{and}~m\in {\mathcal M}_{i,j}\},\\
~~~{\mathcal N}_c &\triangleq \{(i,j')|{\mathcal M}_{i,j'}={\mathcal M}_{i,j}\backslash m, \forall(i,j)\in {\mathcal J}_c \},\\
\end{aligned}
\end{equation}
\normalsize
\small
\begin{equation} \label{def_alpha_beta_c}
\begin{aligned}
~~~~~~~~~~~~~{\alpha}_{i,j'}\triangleq {\beta}_{i,j}~~\text{for $(i,j')\in {\mathcal N}_c$, $(i,j) \in {\mathcal J}_c$ s.t. ${\mathcal M}_{i,j'}={\mathcal M}_{i,j}\backslash m$},
\end{aligned}
\end{equation}
\normalsize
for all $ {\beta}_{i,j} \in \mathbb{Z}^{\geq 0}$ satisfying ${\beta}_{i,j}\leq {c}_{i,j}$, $\forall(i,j)\in {\mathcal J}_c$ and
\small
\begin{equation} \label{def_sum_c}
\begin{aligned}
\sum_{(i,j)\in {\mathcal J}_c} {{\beta}_{i,j}} &=e_m-d_m ,\\
{\beta}_{i,j}&=0~,~ \forall(i,j)\not\in {\mathcal J}_c.
\end{aligned}
\end{equation}
\normalsize
By introducing coefficient ${\hat c}_{i,j}$ as follows
\small
\begin{equation} \label{def_Cija}
\begin{aligned}
{\hat c}_{i,j}\triangleq
\left\{
	\begin{array}{ll}
		c_{i,j}  & \mbox{if } (i,j)\not\in {\mathcal N}_c~,~(i,j)\not \in {\mathcal J}_c, \\
		c_{i,j}+{\alpha}_{i,j} & \mbox{if } (i,j)\in {\mathcal N}_c,\\
		c_{i,j}-{\beta}_{i,j}  & \mbox{if } (i,j)\in {\mathcal J}_c,
	\end{array}
\right.
\end{aligned}
\end{equation}
\normalsize
we now consider the following inequality in region ${\mathcal A}_2$
\small
\begin{equation} \label{ineq_a}
\begin{aligned}
\sum_{i=1}^K{\sum_{j=1}^{2^K}{{\hat c}_{i,j}(R_{ip}+\sum_{k \in {\mathcal M}_{i,j}}{R_{kc}})}}~~\leq \sum_{i=1}^K{\sum_{j=1}^{2^K}{{\hat c}_{i,j}H(Y_i|V_{{\mathcal M}_{i,j}^c})}},
\end{aligned}
\end{equation}
\normalsize
and show that
\small
\begin{equation} \label{condition_small_ineq_a}
\begin{aligned}
(A)\quad
\left\{
	\begin{array}{ll}
{\hat d}_{m}={d}_{m}~&,~{\hat e}_{m}= \min{(e_{m},d_{m})}\\{\hat d}_{m'}={d}_{m'}~&,~{\hat e}_{m'}= e_{m'}~~~~\forall m'\neq m,
\end{array}
\right.
\end{aligned}
\end{equation}
\normalsize
\small
\begin{equation} \label{condition_small_ineq_a_B}
\begin{aligned}
~~~~~~~~~~~~~(B)\quad
\left\{
	\begin{array}{ll}
\sum \limits_{\substack{(i,j):\\i\in[K],j\in[2^K]}}{\hspace{-4mm}{\hat c}_{i,j}H(Y_i|V_{{\mathcal M}_{i,j}^c})} \leq \hspace{-4mm}{\sum \limits_{\substack{(i,j):\\i\in[K],j\in[2^K]}}{\hspace{-4mm}{c}_{i,j}H(Y_i|V_{{\mathcal M}_{i,j}^c})}}.
\end{array}
\right.
\end{aligned}
\end{equation}
\normalsize

We first prove the existence of such $\beta_{i,j}$'s satisfying (\ref{def_sum_c}) in Claim \ref{exist_claim_c}.
\newtheorem{claim}{Claim}
\begin{claim}\label{exist_claim_c}
By considering (\ref{def_JNa}), there exists ${\beta}_{i,j}\leq {c}_{i,j}$ for all $(i,j)$ such that (\ref{def_sum_c}).
\end{claim}
\newtheorem{claimproof}{Proof of claim}
\begin{proof}
One can easily verify the existence of such ${\beta}_{i,j}\leq {c}_{i,j}$, $\forall(i,j)\in {\mathcal J}_c$ satisfying (\ref{def_sum_c}) by showing that $\sum_{(i,j)\in {\mathcal J}_c} {{\beta}_{i,j}}$ can take value $e_m-d_m$ as follows
\small
\begin{equation} \label{lem_exist_c}
\begin{aligned}
\sum_{(i,j)\in {\mathcal J}_c} {{\beta}_{i,j}} &\leq \sum_{(i,j)\in {\mathcal J}_c} {{c}_{i,j}} \\&= \sum_{(i,j):m \in {\mathcal M}_{i,j}} {{c}_{i,j}}
-\sum_{j:m \in {\mathcal M}_{m,j}} {{c}_{m,j}}
\\&=\underbrace{\sum_{(i,j):m \in {\mathcal M}_{i,j}} {\hspace{-3mm}{c}_{i,j}}}_{=e_m}+\underbrace{\sum_{j:m \not\in {\mathcal M}_{m,j}} {\hspace{-3mm}{c}_{m,j}}}_{\geq 0}-\underbrace{(\sum_{j:m \not\in {\mathcal M}_{m,j}} {\hspace{-3mm}{c}_{m,j}}+\hspace{-3mm}\sum_{j:m \in {\mathcal M}_{m,j}} {\hspace{-3mm}{c}_{m,j}})}_{=d_m}.
\end{aligned}
\end{equation}
\normalsize
\end{proof}
We next present Claim \ref{coeff_claim_c} to verify (\ref{condition_small_ineq_a}).
\begin{claim} \label{coeff_claim_c}
By considering (\ref{def_JNa})-(\ref{def_Cija}), we have (\ref{condition_small_ineq_a}).
\end{claim}
\begin{proof}
We first find ${\hat d}_{m}$ and ${\hat e}_{m}$, then we derive ${\hat d}_{m'}$ and ${\hat e}_{m'}$ for all $m'\neq m$ as follows
\small
\begin{equation} \label{def_d_m_c}
\begin{aligned}
{\hat d}_m=\sum_{j=1}^{2^K}{{\hat c}_{m,j}}\overset{(a)}{=}\sum_{j=1}^{2^K}{c_{m,j}}=d_m,
\end{aligned}
\end{equation}
\normalsize
where step $(a)$ follows from the fact $(m,j)\not\in {\mathcal N}_c$ and $(m,j)\not\in {\mathcal J}_c$.
\small
\begin{equation} \label{def_e_m_c}
\begin{aligned}
{\hat e}_{m}&=\sum_{i=1}^K{\sum_{j=1}^{2^K}{{\hat c}_{i,j}I_{{\mathcal M}_{i,j}}(m)}}\\&=\sum_{\substack{(i,j):(i,j)\not\in{\mathcal N}_c, (i,j)\not\in{\mathcal J}_c}}{\hspace{-7mm}{\hat c}_{i,j}I_{{\mathcal M}_{i,j}}(m)}
+\sum_{\substack{(i,j):(i,j)\in{\mathcal N}_c}}{\hspace{-3mm}{\hat c}_{i,j}I_{{\mathcal M}_{i,j}}(m)}+
\sum_{\substack{(i,j):(i,j)\in{\mathcal J}_c}}{\hspace{-3mm}{\hat c}_{i,j}I_{{\mathcal M}_{i,j}}(m)}
\\&\overset{(a)}{=}\hspace{-4mm}\sum_{\substack{(i,j):i\in [K],j\in[2^K]}}{\hspace{-3mm}{c}_{i,j}I_{{\mathcal M}_{i,j}}(m)}+\hspace{-4mm}\sum_{\substack{(i,j'):(i,j')\in{\mathcal N}_c}}{\hspace{-6mm}{\alpha}_{i,j'}I_{{\mathcal M}_{i,j'}}(m)}
-\hspace{-4mm}\sum_{\substack{(i,j):(i,j)\in{\mathcal J}_c}}{\hspace{-3mm}{\beta}_{i,j}I_{{\mathcal M}_{i,j}}(m)}
\\&\overset{(b)}{=} \hspace{-6mm}\sum_{\substack{(i,j):i\in [K],j\in[2^K]}}{\hspace{-5mm}{c}_{i,j}I_{{\mathcal M}_{i,j}}(m)}
-\hspace{-4mm}\sum_{\substack{(i,j):(i,j)\in{\mathcal J}_c}}{\hspace{-5mm}{\beta}_{i,j}}
\\&
\overset{(c)}{=}\hspace{-5mm}\sum_{\substack{(i,j):i\in [K],j\in[2^K]}}{\hspace{-5mm}{c}_{i,j}I_{{\mathcal M}_{i,j}}(m)}-(e_m-d_m)
\\&=d_m=\min{(d_m,e_m)},
\end{aligned}
\end{equation}
\normalsize
where step $(a)$ follows from (\ref{def_Cija}). Step $(b)$ follows from the fact that $I_{{\mathcal M}_{i,j'}}(m)=0$ for $(i,j')\in {\mathcal N}_c$ and $I_{{\mathcal M}_{i,j}}(m)=1$ for $(i,j)\in {\mathcal J}_c$ according to (\ref{def_JNa}). Finally, step $(c)$ follows from (\ref{def_sum_c}).

Regarding ${\hat d}_{m'}$ and ${\hat e}_{m'}$ where $m'\neq m$, we have
\small
\begin{equation} \label{def_d_m_prim_c}
\begin{aligned}
{\hat d}_{m'}&=\sum_{j=1}^{2^K}{{\hat c}_{{m'},j}}\\&=\sum_{\substack{j:(m',j)\not\in {\mathcal J}_c, (m',j)\not\in {\mathcal N}_c}}{\hspace{-3mm}{\hat c}_{m',j}}
+\sum_{j:(m',j)\in {\mathcal N}_c}{{\hat c}_{m',j}}+
\sum_{\substack{j:(m',j)\in {\mathcal J}_c}}{\hspace{-3mm}{\hat c}_{m',j}}
\\&\overset{(a)}{=}\sum_{j=1}^{2^K}{{c}_{{m'},j}}
+\sum_{\substack{j':(m',j')\in {\mathcal N}_c}}{\hspace{-3mm}{\alpha}_{m',j'}}-\sum_{\substack{j:(m',j)\in {\mathcal J}_c}}{\hspace{-3mm}{\beta}_{m',j}}
\\&\overset{(b)}{=}\sum_{j=1}^{2^K}{{c}_{{m'},j}}
+\sum_{\substack{j:(m',j)\in {\mathcal J}_c}}{\hspace{-3mm}{\beta}_{m',j}}-\sum_{\substack{j:(m',j)\in {\mathcal J}_c}}{\hspace{-3mm}{\beta}_{m',j}}\\&=d_{m'},
\end{aligned}
\end{equation}
\normalsize
where step $(a)$ and $(b)$ follow from (\ref{def_Cija}) and (\ref{def_alpha_beta_c}), respectively.
\small
\begin{equation} \label{def_e_m_prim_c}
\begin{aligned}
{\hat e}_{m'}&=\sum_{i=1}^K{\sum_{j=1}^{2^K}{{\hat c}_{i,j}I_{{\mathcal M}_{i,j}}(m')}}\\&=\sum_{\substack{(i,j):(i,j)\not\in{\mathcal N}_c, (i,j)\not\in{\mathcal J}_c}}{\hspace{-3mm}{\hat c}_{i,j}I_{{\mathcal M}_{i,j}}(m')}
+\sum_{\substack{(i,j):(i,j)\in{\mathcal N}_c}}{\hspace{-3mm}{\hat c}_{i,j}I_{{\mathcal M}_{i,j}}(m')}+
\sum_{\substack{(i,j):(i,j)\in{\mathcal J}_c}}{\hspace{-3mm}{\hat c}_{i,j}I_{{\mathcal M}_{i,j}}(m')}
\\&\overset{(a)}{=}\sum_{\substack{(i,j):i\in [K],j\in[2^K]}}{\hspace{-3mm}{c}_{i,j}I_{{\mathcal M}_{i,j}}(m')}+\sum_{\substack{(i,j'):(i,j')\in{\mathcal N}_c}}{\hspace{-5mm}{\alpha}_{i,j'}I_{{\mathcal M}_{i,j'}}(m')}
-\sum_{\substack{(i,j):(i,j)\in{\mathcal J}_c}}{\hspace{-3mm}{\beta}_{i,j}I_{{\mathcal M}_{i,j}}(m')}
\\&\overset{(b)}{=} \sum_{\substack{(i,j):i\in [K],j\in[2^K]}}{\hspace{-3mm}{c}_{i,j}I_{{\mathcal M}_{i,j}}(m')}
+\sum_{\substack{(i,j):(i,j)\in{\mathcal J}_c}}{\hspace{-3mm}{\beta}_{i,j}I_{{\mathcal M}_{i,j}}(m')}-\sum_{\substack{(i,j):(i,j)\in{\mathcal J}_c}}{\hspace{-3mm}{\beta}_{i,j}I_{{\mathcal M}_{i,j}}(m')}
\\&=e_{m'},
\end{aligned}
\end{equation}
\normalsize
where step $(a)$ follows from (\ref{def_Cija}). Step $(b)$ follows from (\ref{def_alpha_beta_c}) and the fact $I_{{\mathcal M}_{i,j'}}(m')=I_{{\mathcal M}_{i,j}}(m')$ for $(i,j')\in {\mathcal N}_c$, $(i,j)\in {\mathcal J}_c$, and $m'\neq m$.
\end{proof}
Based on Claim 2 and Lemma 3, the projection of (\ref{ineq_a}) would result in a bound on $\sum_{k=1}^K{\min({\hat d}_k,{\hat e}_k)R_{k}}$ where $\min({\hat d}_k,{\hat e}_k)=\min({d}_k,{e}_k)$, $\forall k \in [K]$.

We now provide the following Claim to prove (\ref{condition_small_ineq_a_B}).
\begin{claim} \label{rhs_claim_c}
By considering (\ref{def_JNa})-(\ref{def_Cija}), we have (\ref{condition_small_ineq_a_B}).
\end{claim}
\begin{proof}
The proof is as follows
\small
\begin{equation} \label{rhs_c}
\begin{aligned}
\sum_{i=1}^K{\sum_{j=1}^{2^K}{{\hat c}_{i,j}H(Y_i|V_{{\mathcal M}_{i,j}^c})}}&\overset{(a)}{=}\sum_{\substack{(i,j):(i,j)\not\in{\mathcal N}_c, (i,j)\not\in{\mathcal J}_c}}{\hspace{-3mm}{c}_{i,j}H(Y_i|V_{{\mathcal M}_{i,j}^c})}
+\sum_{\substack{(i,j'):(i,j')\in{\mathcal N}_c}}{\hspace{-3mm}{\alpha}_{i,j'}H(Y_i|V_{{\mathcal M}_{i,j'}^c})}-
\sum_{\substack{(i,j):(i,j)\in{\mathcal J}_c}}{\hspace{-3mm}{\beta}_{i,j}H(Y_i|V_{{\mathcal M}_{i,j}^c})}
\\&\overset{(b)}{=}\hspace{-3mm}\sum_{\substack{(i,j):i\in [K],j\in[2^K]}}{\hspace{-6mm}{c}_{i,j}H(Y_i|V_{{\mathcal M}_{i,j}^c})}+\hspace{-3mm}\sum_{\substack{(i,j):(i,j)\in{\mathcal J}_c}}{\hspace{-3mm}{\beta}_{i,j}H(Y_i|V_mV_{{\mathcal M}_{i,j}^c})}
-\sum_{\substack{(i,j):(i,j)\in{\mathcal J}_c}}{\hspace{-3mm}{\beta}_{i,j}H(Y_i|V_{{\mathcal M}_{i,j}^c})}
\\&\overset{(c)}{\leq} \sum_{\substack{(i,j):i\in [K],j\in[2^K]}}{\hspace{-3mm}{c}_{i,j}H(Y_i|V_{{\mathcal M}_{i,j}^c})},
\end{aligned}
\end{equation}
\normalsize
where step $(a)$ follows from (\ref{def_Cija}). Step $(b)$ follows from (\ref{def_alpha_beta_c}) and the fact ${\mathcal M}_{i,j'}={\mathcal M}_{i,j}\backslash m$ for $(i,j')\in {\mathcal N}_c$, $(i,j)\in {\mathcal J}_c$. Finally, step $(c)$ follows from the fact $H(Y_i|V_mV_{{\mathcal M}_{i,j}^c})\leq H(Y_i|V_{{\mathcal M}_{i,j}^c})$ for all $(i,j)\in {\mathcal J}_c$.
\end{proof}

Therefore, it is obvious that if Claims 1-3 are satisfied, the bound obtained by projecting (\ref{ineq_a}) becomes tighter than the bound found by projecting (\ref{ref_non_eq_coef}).

By repeating the aforementioned process for all $m$ where $e_m > d_m$
and updating the resulting inequality, i.e. replacing coefficients $c_{i,j}$'s with ${\hat c}_{i,j}$'s,
we find inequality (\ref{ach_1}) which satisfies $a)$ (\ref{condition_big_ineq_a}) and $b)$ its projection leads to a tighter bound compared to the projection of (\ref{ref_non_eq_coef}).



\noindent{\bf Step 2:}
In this step, we aim to find new inequality
\small
\begin{equation} \label{ach_2}
\begin{aligned}
&\sum_{i=1}^K{\sum_{j=1}^{2^K}{{\bar c}_{i,j}(R_{ip}+\sum_{k \in {\mathcal M}_{i,j}}{R_{kc}})}}~~\leq \sum_{i=1}^K{\sum_{j=1}^{2^K}{{\bar c}_{i,j}H(Y_i|V_{{\mathcal M}_{i,j}^c})}},
\end{aligned}
\end{equation}
\normalsize
from (\ref{ach_1}) such that $a)$ ${\bar d}_{m}={\bar e}_{m}$, $\forall m \in [K]$ where ${\bar d}_{m}$ and ${\bar e}_{m}$ are similarly defined as (\ref{eq2conditions2}) and $b)$ the projection of (\ref{ach_2}), according to transformation matrix ${\bf A}$, results in a tighter bound on $\sum_{k=1}^K{\min(d_k,e_k)R_{k}}$ compared to the projection of (\ref{ach_1}).

Let us consider a $m$ where $d_m>e_m$ and define
\small
\begin{equation} \label{def_JNb}
\begin{aligned}
{\mathcal J}_p &\triangleq \{(i,j)|i= m,~m\not\in {\mathcal M}_{i,j}\},\\
{\mathcal N}_p^+ &\triangleq \{(i,j)|i\neq m,~i \in {\mathcal M}_{i,j}\},\\
{\mathcal N}_p^- &\triangleq \{(i,j)|i\neq m,~i \not \in {\mathcal M}_{i,j}\},\\
\end{aligned}
\end{equation}
\normalsize
\small
\begin{equation} \label{def_gamma_p}
\begin{aligned}
~~~~~~~~~~~{\gamma}_{m'} \triangleq \sum_{(k,j)\in {\mathcal J}_p}{{\alpha}_{k,j}I_{{\mathcal M}_{k,j}}(m')},~~\forall m'\neq m,
\end{aligned}
\end{equation}
\normalsize
\small
\begin{equation} \label{def_mu_p}
\begin{aligned}
~~~~~~~~~~~~~~~~~~~~~~~~~~~~~~~~~~~~~~{\mu}_{i,j'} \triangleq {\beta}_{i,j}~~\text{for $(i,j')\in {\mathcal N}_p^+$, $(i,j) \in {\mathcal N}_p^{-}$ s.t. ${\mathcal M}_{i,j'}={\mathcal M}_{i,j}\cup \{i\}$},
\end{aligned}
\end{equation}
\normalsize
for all ${\alpha}_{i,j},{\beta}_{i,j} \in \mathbb{Z}^{\geq 0}$ satisfying ${\alpha}_{i,j},{\beta}_{i,j}\leq {c}_{i,j}$ and
\small
\begin{equation} \label{def_sum_alpha_p}
\begin{aligned}
~~~~~~~\sum_{(i,j)\in {\mathcal J}_p} {{\alpha}_{i,j}}&=d_m-e_m~~~~~~~,~~~~~~~~{\alpha}_{i,j}&=0~~~~\forall (i,j)\not\in {\mathcal J}_p.
\end{aligned}
\end{equation}
\normalsize
\small
\begin{equation} \label{def_sum_beta_p}
\begin{aligned}
\sum_{j:(m',j)\in {\mathcal N}_p^-} {\hspace{-3mm}{\beta}_{m',j}}&={\gamma}_{m'}~~\forall m'\neq m~~~~~~~,~~~~~~~~~{\beta}_{i,j}&=0~~~~\forall (i,j)\not\in {\mathcal N}_p^-.
\end{aligned}
\end{equation}
\normalsize
By introducing coefficient ${\hat c}_{i,j}$ as follows
\small
\begin{equation} \label{def_Cijb}
\begin{aligned}
{\hat c}_{i,j}\triangleq
\left\{
	\begin{array}{ll}
		c_{i,j}  & \mbox{if } (i,j)\not\in {\mathcal J}_p~,~(i,j)\not \in {\mathcal N}_p^{+},~(i,j)\not \in {\mathcal N}_p^{-}, \\
		c_{i,j}+{\mu}_{i,j} & \mbox{if } (i,j)\in {\mathcal N}_p^{+},\\
        c_{i,j}-{\beta}_{i,j} & \mbox{if } (i,j)\in {\mathcal N}_p^{-},\\
		c_{i,j}-{\alpha}_{i,j}  & \mbox{if } (i,j)\in {\mathcal J}_p,
	\end{array}
\right.
\end{aligned}
\end{equation}
\normalsize
where $c_{i,j}\triangleq{\tilde c}_{i,j}$ for all $i,j$, we now consider the following inequality in region ${\mathcal A}_2$
\small
\begin{equation} \label{ineq_b}
\begin{aligned}
\sum_{i=1}^K{\sum_{j=1}^{2^K}{{\hat c}_{i,j}(R_{ip}+\sum_{k \in {\mathcal M}_{i,j}}{R_{kc}})}}~~\leq \sum_{i=1}^K{\sum_{j=1}^{2^K}{{\hat c}_{i,j}H(Y_i|V_{{\mathcal M}_{i,j}^c})}},
\end{aligned}
\end{equation}
\normalsize
and show that
\small
\begin{equation} \label{condition_small_ineq_b}
\begin{aligned}
(C)\quad
\left\{
\begin{array}{ll}
{\hat d}_{m}={\hat e}_{m}= e_{m}&\\{\hat d}_{m'}={d}_{m'}~&,~{\hat e}_{m'}= e_{m'}~~~~\forall m'\neq m,
\end{array}
\right.
\end{aligned}
\end{equation}
\normalsize
\small
\begin{equation} \label{condition_small_ineq_b_B}
\begin{aligned}
~~~~~~~~(D)\quad
\left\{
	\begin{array}{ll}
\sum \limits_{\substack{(i,j):\\i\in[K],j\in[2^K]}}{\hspace{-4mm}{\hat c}_{i,j}H(Y_i|V_{{\mathcal M}_{i,j}^c})} \leq \hspace{-4mm}{\sum \limits_{\substack{(i,j):\\i\in[K],j\in[2^K]}}{\hspace{-4mm}{c}_{i,j}H(Y_i|V_{{\mathcal M}_{i,j}^c})}}.
\end{array}
\right.
\end{aligned}
\end{equation}
\normalsize

We first present Claim \ref{exist_claim_p} to prove the existence of such $\alpha_{i,j}$'s and $\beta_{i,j}$'s satisfying (\ref{def_sum_alpha_p}) and (\ref{def_sum_beta_p}), respectively.
\begin{claim}\label{exist_claim_p}
By considering (\ref{def_JNb})-(\ref{def_gamma_p}), there exists ${\alpha}_{i,j}\leq {c}_{i,j}$ for all $(i,j)\in {\mathcal J}_p$ and ${\beta}_{i,j}\leq {c}_{i,j}$ for all $(i,j)\in {\mathcal N}_p^-$ satisfying (\ref{def_sum_alpha_p}) and (\ref{def_sum_beta_p}), respectively.
\end{claim}
\begin{proof}
One can easily prove that there exists such ${\alpha}_{i,j}\leq {c}_{i,j}$ and ${\beta}_{i,j}\leq {c}_{i,j}$ satisfying (\ref{def_sum_alpha_p}) and (\ref{def_sum_beta_p}) by showing that $\sum_{(i,j)\in {\mathcal J}_p} {{\alpha}_{i,j}}$ and $\sum_{(m',j)\in {\mathcal N}_p^-} {{\beta}_{m',j}}$ can respectively take values $d_m-e_m$ and $\gamma_{m'}$ for $m'\neq m$ as follows
\small
\begin{equation} \label{lem_exist_p}
\begin{aligned}
~~~~~~~~~~~~~~~~~~~~~~~~~~~~~~\sum_{(i,j)\in {\mathcal J}_p} {{\alpha}_{i,j}}&=\sum_{j:m \not\in {\mathcal M}_{m,j}} {{\alpha}_{m,j}} \\&\leq \sum_{j:m \not\in {\mathcal M}_{m,j}} {{c}_{m,j}}
\\& = \sum_{j:m \not\in {\mathcal M}_{m,j}} {{c}_{m,j}}
+\sum_{(i,j):m \in {\mathcal M}_{i,j}} {{c}_{i,j}}-\underbrace{\sum_{(i,j):m \in {\mathcal M}_{i,j}} {{c}_{i,j}}}_{=e_m}
\\&=\underbrace{\sum_{j:m \not\in {\mathcal M}_{m,j}} {{c}_{m,j}} + \sum_{j:m \in {\mathcal M}_{m,j}} {{c}_{m,j}}}_{=d_m}+
\underbrace{\sum_{\substack{(i,j):i\neq m, m \in {\mathcal M}_{i,j}}} {{c}_{i,j}}}_{\geq 0}-e_m,
\end{aligned}
\end{equation}
\normalsize
\small
\begin{equation} \label{lem_exist_p2}
\begin{aligned}
&\sum_{(m',j)\in {\mathcal N}_p^-} {{\beta}_{m',j}}=\sum_{j:m'\not\in {\mathcal M}_{m',j}} {{\beta}_{m',j}} \leq \sum_{j:m'\not\in {\mathcal M}_{m',j}} {{c}_{m',j}}.
\end{aligned}
\end{equation}
\normalsize
Since we have
\small
\begin{equation} \label{lem_exist_p3}
\begin{aligned}
&\hspace{-3mm}\sum_{\substack{j:\\m'\not\in {\mathcal M}_{m',j}}} {\hspace{-5mm}{c}_{m',j}} \overset{(a)}{\geq} \hspace{-3mm}\sum_{\substack{(i,j):i\neq m'\\, m'\in {\mathcal M}_{i,j}}} {\hspace{-4mm}{c}_{i,j}} \geq
\hspace{-3mm}\sum_{\substack{j:\\m'\in {\mathcal M}_{m,j}}} {\hspace{-4mm}{c}_{m,j}} \geq
\hspace{-5mm}\sum_{\substack{j:m \not\in {\mathcal M}_{m,j},\\m' \in {\mathcal M}_{m,j}}} {\hspace{-5mm}{c}_{m,j}} \geq
\hspace{-4mm}\sum_{\substack{j:m \not\in {\mathcal M}_{m,j},\\m' \in {\mathcal M}_{m,j}}} {\hspace{-5mm}{\alpha}_{m,j}}\overset{(b)}{=}\gamma_{m'},
\end{aligned}
\end{equation}
\normalsize
$\forall m'\neq m$ where step $(a)$ follows from $d_{m'}\geq e_{m'}$ implying
\small
\begin{equation} \label{lem_exist_p31}
\begin{aligned}
\sum_{j:m'\not\in {\mathcal M}_{m',j}} {\hspace{-5mm}{c}_{m',j}}+\hspace{-5mm}\sum_{j:m'\in {\mathcal M}_{m',j}} {\hspace{-5mm}{c}_{m',j}} \geq
\hspace{-5mm}\sum_{j:m'\in {\mathcal M}_{m',j}} {\hspace{-3mm}{c}_{m',j}}+\hspace{-5mm}\sum_{(i,j):i\neq m',m'\in {\mathcal M}_{i,j}} {\hspace{-3mm}{c}_{i,j}}.
\end{aligned}
\end{equation}
\normalsize
Step $(b)$ follows from (\ref{def_gamma_p}). Based on (\ref{lem_exist_p}), (\ref{lem_exist_p2}), and (\ref{lem_exist_p3}), the proof of existence of such ${\alpha}_{i,j}\leq {c}_{i,j}$ and ${\beta}_{i,j}\leq {c}_{i,j}$ completes.
\end{proof}

We next demonstrate Claim \ref{coeff_claim_p} to prove (\ref{condition_small_ineq_b}).
\begin{claim}\label{coeff_claim_p}
By considering (\ref{def_JNb})-(\ref{def_Cijb}), we have (\ref{condition_small_ineq_b}).
\end{claim}
\begin{proof}
We first find ${\hat d}_{m}$ and ${\hat e}_{m}$, then we derive ${\hat d}_{m'}$ and ${\hat e}_{m'}$ for all $m'\neq m$ as follows
\small
\begin{equation} \label{def_d_m_p}
\begin{aligned}
{\hat d}_m&=\sum_{j=1}^{2^K}{{\hat c}_{m,j}}\\&\overset{(a)}{=}\sum_{j=1}^{2^K}{c_{m,j}}+\hspace{-5mm}\sum_{j:(m,j)\in {\mathcal N}_p^+ }{\hspace{-4mm}{\mu}_{m,j}}-\hspace{-5mm}\sum_{j:(m,j)\in {\mathcal N}_p^- }{\hspace{-4mm}{\beta}_{m,j}}
-\hspace{-5mm}\sum_{j:(m,j)\in {\mathcal J}_p }{\hspace{-4mm}{\alpha}_{m,j}}
\\&\overset{(b)}{=}\sum_{j=1}^{2^K}{c_{m,j}}- \sum_{j:(m,j)\in {\mathcal J}_p }{\hspace{-4mm}{\alpha}_{m,j}}
\\&\overset{(c)}{=}d_m-(d_m-e_m)=e_m,
\end{aligned}
\end{equation}
\normalsize
where step $(a)$ follows from (\ref{def_Cijb}). Step $(b)$ follows from the fact that $(m,j)\not\in {\mathcal N}_p^+$ and $(m,j)\not\in {\mathcal N}_p^-$. Finally, step $(c)$ follows from (\ref{def_sum_alpha_p}).

\small
\begin{equation} \label{def_e_m_p}
\begin{aligned}
{\hat e}_{m}&=\sum_{i=1}^K{\sum_{j=1}^{2^K}{{\hat c}_{i,j}I_{{\mathcal M}_{i,j}}(m)}}
\\&\overset{(a)}{=}\sum_{\substack{(i,j):i\in [K],j\in[2^K]}}{\hspace{-3mm}{c}_{i,j}I_{{\mathcal M}_{i,j}}(m)}
+\sum_{\substack{(i,j'):(i,j')\in{\mathcal N}_p^+}}{\hspace{-3mm}{\mu}_{i,j'}I_{{\mathcal M}_{i,j'}}(m)}-
\sum_{\substack{(i,j):(i,j)\in{\mathcal N}_p^-}}{\hspace{-3mm}{\beta}_{i,j}I_{{\mathcal M}_{i,j}}(m)}
-\sum_{\substack{(i,j):(i,j)\in{\mathcal J}_p}}{\hspace{-3mm}{\alpha}_{i,j}\underbrace{I_{{\mathcal M}_{i,j}}(m)}_{\overset{(b)}{=}0}}
\\&\overset{(c)}{=}\sum_{\substack{(i,j):i\in [K],j\in[2^K]}}{\hspace{-3mm}{c}_{i,j}I_{{\mathcal M}_{i,j}}(m)}
+\sum_{\substack{(i,j):(i,j)\in{\mathcal N}_p^-}}{\hspace{-3mm}{\beta}_{i,j}I_{{\mathcal M}_{i,j}}(m)}-\sum_{\substack{(i,j):(i,j)\in{\mathcal N}_p^-}}{\hspace{-3mm}{\beta}_{i,j}I_{{\mathcal M}_{i,j}}(m)}\\&=e_m,
\end{aligned}
\end{equation}
\normalsize
where step $(a)$ follows from (\ref{def_Cijb}) and step $(b)$ follows from $m\not\in {\mathcal M}_{i,j}$ for $(i,j)\in{\mathcal J}_p$. Finally, step $(c)$ follows from (\ref{def_mu_p}), meaning that $\mu_{i,j'}={\beta}_{i,j}$ and $I_{{\mathcal M}_{i,j'}}(m)=I_{{\mathcal M}_{i,j}}(m)$ for $(i,j)\in {\mathcal N}_p^{-}$ and $(i,j')\in {\mathcal N}_p^+$.

Regarding ${\hat d}_{m'}$ and ${\hat e}_{m'}$ where $m'\neq m$, we have
\small
\begin{equation} \label{def_d_m_prim_p}
\begin{aligned}
{\hat d}_{m'}&=\sum_{j=1}^{2^K}{{\hat c}_{{m'},j}}
\\&\overset{(a)}{=}\sum_{j=1}^{2^K}{{c}_{{m'},j}}
+\hspace{-3mm}\sum_{\substack{j':(m',j')\in {\mathcal N}_p^+}}{\hspace{-3mm}{\mu}_{m',j'}}-\hspace{-3mm}\sum_{\substack{j:(m',j)\in {\mathcal N}_p^-}}{\hspace{-3mm}{\beta}_{m',j}}
-\underbrace{\sum_{\substack{j:(m',j)\in {\mathcal J}_p}}{\hspace{-3mm}{\alpha}_{m',j}}}_{\overset{(b)}{=}0} \\&\overset{(c)}{=}\sum_{j=1}^{2^K}{{c}_{{m'},j}}
+\hspace{-5mm}\sum_{\substack{j:(m',j)\in {\mathcal N}_p^-}}{\hspace{-3mm}{\beta}_{m',j}}-\hspace{-5mm}\sum_{\substack{j:(m',j)\in {\mathcal N}_p^-}}{\hspace{-3mm}{\beta}_{m',j}}
\\&=d_{m'},
\end{aligned}
\end{equation}
\normalsize
where steps $(a)$, $(b)$, and $(c)$ follow from (\ref{def_Cijb}), $(m',j)\not\in{\mathcal J}_{p}$, and (\ref{def_mu_p}), respectively.

\small
\begin{equation} \label{def_e_m_prim_p}
\begin{aligned}
{\hat e}_{m'}&=\sum_{i=1}^K{\sum_{j=1}^{2^K}{{\hat c}_{i,j}I_{{\mathcal M}_{i,j}}(m')}}
\\&\overset{(a)}{=}\sum_{\substack{(i,j):i\in [K],j\in[2^K]}}{\hspace{-3mm}{c}_{i,j}I_{{\mathcal M}_{i,j}}(m')}
\underbrace{-\hspace{-5mm}\sum_{\substack{(i,j):(i,j)\in{\mathcal J}_p}}{\hspace{-6mm}{\alpha}_{i,j}I_{{\mathcal M}_{i,j}}(m')}}_{\overset{(b)}{=}-{\gamma}_{m'}}-\hspace{-2mm}\sum_{\substack{(i,j):\\(i,j)\in{\mathcal N}_p^-}}{\hspace{-3mm}{\beta}_{i,j}I_{{\mathcal M}_{i,j}}(m')}
+\hspace{-2mm}\sum_{\substack{(i,j):\\(i,j)\in{\mathcal N}_p^+}}{\hspace{-3mm}{\mu}_{i,j}I_{{\mathcal M}_{i,j}}(m')}
\\&= e_{m'}-{\gamma}_{m'}-\hspace{-6mm}\sum_{\substack{j:(m',j)\in{\mathcal N}_p^-}}{\hspace{-3mm}{\beta}_{m',j}\underbrace{I_{{\mathcal M}_{m',j}}(m')}_{\overset{(c)}{=}0}}-\hspace{-6mm}\sum_{\substack{(i,j)\in{\mathcal N}_p^-,i\neq m'}}{\hspace{-3mm}{\beta}_{i,j}I_{{\mathcal M}_{i,j}}(m')}
+\hspace{-4mm}\sum_{\substack{j:(m',j)\in{\mathcal N}_p^+}}{\hspace{-4mm}{\mu}_{m',j}\underbrace{I_{{\mathcal M}_{m',j}}(m')}_{\overset{(d)}{=}1}}+\hspace{-4mm}\sum_{\substack{(i,j')\in{\mathcal N}_p^+,i\neq m'}}{\hspace{-6mm}{\mu}_{i,j'}I_{{\mathcal M}_{i,j'}}(m')}
\\&\overset{(e)}{=}e_m-{\gamma}_{m'}\nonumber
-\hspace{-4mm}\sum_{\substack{(i,j)\in{\mathcal N}_p^-,i\neq m'}}{\hspace{-6mm}{\beta}_{i,j}I_{{\mathcal M}_{i,j}}(m')}+
\hspace{-4mm}\sum_{\substack{j:(m',j)\in{\mathcal N}_p^+}}{\hspace{-3mm}{\mu}_{m',j}}+\hspace{-6mm}\sum_{\substack{(i,j)\in{\mathcal N}_p^-,i\neq m'}}{\hspace{-3mm}{\beta}_{i,j}I_{{\mathcal M}_{i,j}}(m')}
\\&
\overset{(f)}{=}e_m-{\gamma}_{m'}+{\gamma}_{m'}=e_{m'},
\end{aligned}
\end{equation}
\normalsize
where steps $(a)$ and $(b)$ follow from (\ref{def_Cijb}) and (\ref{def_gamma_p}), respectively. Steps $(c)$ and $(d)$ follow from (\ref{def_JNb}). Finally, steps $(e)$ and $(f)$ follow from (\ref{def_mu_p}) and $(\ref{def_sum_beta_p})$.
\end{proof}
Based on Claim 5 and Lemma 3, the projection of (\ref{ineq_b}) would result in a bound on $\sum_{k=1}^K{\min({\hat d}_k,{\hat e}_k)R_{k}}$ where $\min({\hat d}_k,{\hat e}_k)=\min({d}_k,{e}_k)$, $\forall k \in [K]$.

Finally, we prove (\ref{condition_small_ineq_b_B}) in the following Claim.
\begin{claim} \label{rhs_claim_p}
By considering (\ref{def_JNb})-(\ref{def_Cijb}), we have (\ref{condition_small_ineq_b_B}).
\end{claim}
\begin{proof}
\small
\begin{equation} \label{rhs_p}
\begin{aligned}
\sum_{i=1}^K{\sum_{j=1}^{2^K}{{\hat c}_{i,j}H(Y_i|V_{{\mathcal M}_{i,j}^c})}}&\overset{(a)}{=}\sum_{\substack{(i,j):i\in [K],j\in[2^K]}}{{c}_{i,j}H(Y_i|V_{{\mathcal M}_{i,j}^c})}
+\hspace{-3mm}\sum_{\substack{(i,j'):(i,j')\in{\mathcal N}_p^+}}{\hspace{-3mm}{\mu}_{i,j'}H(Y_i|V_{{\mathcal M}_{i,j'}^c})}-
\hspace{-3mm}\sum_{\substack{(i,j):(i,j)\in{\mathcal N}_p^-}}{{\beta}_{i,j}H(Y_i|V_{{\mathcal M}_{i,j}^c})}
\\&~~~~~-\sum_{\substack{(i,j):(i,j)\in{\mathcal J}_p}}{{\alpha}_{i,j}H(Y_i|V_{{\mathcal M}_{i,j}^c})}\\&\overset{(b)}{=}\hspace{-6mm}\sum_{\substack{(i,j):i\in [K],j\in[2^K]}}{\hspace{-2mm}{c}_{i,j}H(Y_i|V_{{\mathcal M}_{i,j}^c})}
+\sum_{\substack{(i,j'):(i,j')\in{\mathcal N}_p^+}}{\underbrace{{\mu}_{i,j'}H(Y_i|V_{{\mathcal M}_{i,j'}^c})-{\mu}_{i,j'}H(Y_i|V_iV_{{\mathcal M}_{i,j'}^c})}_{\overset{(c)}{\leq}{\mu}_{i,j'}H(V_i)}}
\\&~~~~~-\sum_{\substack{(i,j):(i,j)\in{\mathcal J}_p}}{{\alpha}_{i,j}H(Y_i|V_{{\mathcal M}_{i,j}^c})}
\\&\overset{(d)}{\leq} \hspace{-6mm}\sum_{\substack{(i,j):i\in [K],j\in[2^K]}}{\hspace{-3mm}{c}_{i,j}H(Y_i|V_{{\mathcal M}_{i,j}^c})}
+\hspace{-4mm}\sum_{\substack{(i,j'):(i,j')\in{\mathcal N}_p^+}}{\hspace{-4mm}{\mu}_{i,j'}H(V_i)}-\hspace{-4mm}\sum_{\substack{j:(m,j)\in{\mathcal J}_p}}{\hspace{-4mm}{\alpha}_{m,j}H(Y_m|
V_mV_{\{{\mathcal M}_{i,j}\cup \{m\}\}^c})}
\end{aligned}
\end{equation}
\normalsize
where steps $(a)$ and $(b)$ follow from (\ref{def_Cijb}) and (\ref{def_mu_p}), respectively. Step $(c)$ follows from the independence of $V_i$'s and the fact that $I(Y_i;V_i|V_{\mathcal B})\leq H(V_i|V_{\mathcal B})=H(V_i)$ for $i\not\in {\mathcal B}$, which implies $H(Y_i|V_{{\mathcal M}_{i,j'}^c})-H(Y_i|V_iV_{{\mathcal M}_{i,j'}^c})\leq H(V_i)$ where $i \not\in {\mathcal M}_{i,j'}^c$. Furthermore, step $(d)$ follows from the fact that ${\alpha}_{i,j}=0$ for $\forall i\neq m$ and $m \not\in {\mathcal M}_{i,j}$ for $(i,j)\in {\mathcal J}_p$.

We can find an upper bound on (\ref{rhs_p}) by using $H(Y_m|V_mV_{\{{\mathcal M}_{i,j}\cup \{m\}\}^c})\geq H(Y_m|X_mV_{\{{\mathcal M}_{i,j}\cup \{m\}\}^c})=H(V_{{\mathcal M}_{i,j}})$ due to $V_m=g_m(X_m)$, $m \not\in {\mathcal M}_{i,j}$ for $(i,j)\in {\mathcal J}_p$, and (\ref{eq02a}) as follows
\small
\begin{equation} \label{rhs_p_rewrite}
\begin{aligned}
\sum_{i=1}^K{\sum_{j=1}^{2^K}{{\hat c}_{i,j}H(Y_i|V_{{\mathcal M}_{i,j}^c})}}
&\leq \sum_{\substack{(i,j):i\in [K],j\in[2^K]}}{{c}_{i,j}H(Y_i|V_{{\mathcal M}_{i,j}^c})}
+\sum_{\substack{(i,j'):(i,j')\in{\mathcal N}_p^+}}{{\mu}_{i,j'}H(V_i)}-\sum_{\substack{j:(m,j)\in{\mathcal J}_p}}{{\alpha}_{m,j}H(V_{{\mathcal M}_{m,j}})}
\\&\overset{(a)}{=}\hspace{-6mm}\sum_{\substack{(i,j):i\in [K],j\in[2^K]}}{\hspace{-6mm}{c}_{i,j}H(Y_i|V_{{\mathcal M}_{i,j}^c})}+\hspace{-5mm}\sum_{\substack{(i,j'):(i,j')\in{\mathcal N}_p^+}}{\hspace{-5mm}{\mu}_{i,j'}H(V_i)}-\hspace{-2mm}\sum_{m'\neq m}{\hspace{-2mm}{\gamma}_{m'}H(V_{m'})}
\\&{=}\hspace{-3mm}\sum_{\substack{(i,j):i\in [K],\\j\in[2^K]}}{\hspace{-3mm}{c}_{i,j}H(Y_i|V_{{\mathcal M}_{i,j}^c})}
+\sum_{m'\neq m}{H(V_{m'})\Big(\underbrace{\sum_{j':(m',j')\in {\mathcal N}_p^+}{{\mu}_{m',j'}}}_{\overset{(b)}{=}\gamma_{m'}}\Big)}
-\hspace{-2mm}\sum_{m'\neq m}{\hspace{-2mm}{\gamma}_{m'}H(V_{m'})}\\&=\sum_{\substack{(i,j):i\in [K],j\in[2^K]}}{\hspace{-3mm}{c}_{i,j}H(Y_i|V_{{\mathcal M}_{i,j}^c})}
\end{aligned}
\end{equation}
\normalsize
where step $(a)$ follows from the fact that $H(V_{{\mathcal M}_{m,j}})=\sum_{m'\neq m}{I_{{\mathcal M}_{m,j}}(m')H(V_{m'})}$ for $(m,j)\in {\mathcal J}_{p}$ and (\ref{def_gamma_p}). Step $(b)$ follows from (\ref{def_mu_p}) and (\ref{def_sum_beta_p}).
\end{proof}

Therefore, it is easy to verify the projection of (\ref{ineq_b}) leads to a tighter bound compared to the projection of (\ref{ach_1}) if Claims 4-6 are satisfied.

By repeating the above process for all $m$ where $d_m>e_m$ and updating the resulting inequality,
i.e. replacing coefficients $c_{i,j}$'s with ${\hat c}_{i,j}$'s,
we find inequality (\ref{ach_2}) such that $a)$ ${\bar d}_{m}={\bar e}_{m}$, $\forall m \in [K]$ and $b)$ its projection leads to a tighter bound compared to the projection of (\ref{ref_non_eq_coef}). Note that the projection of (\ref{ach_2}) leads to a tighter bound compared to the projection of (\ref{ach_1}) and the projection of (\ref{ach_1}) results in a tighter bound compared to the projection of (\ref{ref_non_eq_coef}).
\end{proof}

We now show that region ${\mathcal A}_3$ is matching region ${\mathcal A}$ according to the following lemma.
\begin{lem}
\label{lem2}
Region ${\mathcal A}_3$ is the same as region ${\mathcal A}$, i.e.
\small
\begin{equation} \label{eq03a2}
\begin{aligned}
{\mathcal A}_3={\mathcal A}.
\end{aligned}
\end{equation}
\normalsize
\end{lem}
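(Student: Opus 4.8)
The plan is to observe that $\mathcal A$ and $\mathcal A_3$ are cut out, apart from the common constraints $R_i\ge 0$, by \emph{the same} family of linear inequalities of the shape $\sum_{i=1}^K a_iR_i\le\sum_{i=1}^K\sum_j H(Y_i\mid V_{(\cdot)^c})$, merely written in two equivalent bookkeeping conventions. It therefore suffices to show that a given inequality of this shape belongs to the defining family of $\mathcal A$ if and only if it belongs to that of $\mathcal A_3$. The only difference between the two descriptions is that $\mathcal A$ records, for each user $i$, an ordered list $\mathcal S_{i,1},\dots,\mathcal S_{i,a_i}$ of subsets of $[K]$ (with repetitions allowed), whereas $\mathcal A_3$ records the same data by a multiplicity vector $(c_{i,1},\dots,c_{i,2^K})$, where $c_{i,j}$ counts how many times the $j$th subset $\mathcal M_{i,j}$ of $[K]$ is used by user $i$. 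So I would set up the natural correspondence between these two encodings and check it preserves both the left-hand coefficients $a_i$, the right-hand value, and the constraint.

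First I would prove $\mathcal A\subseteq\mathcal A_3$: given nonnegative integers $a_i$ and subsets $\mathcal S_{i,j}$ ($1\le j\le a_i$) satisfying the $\mathcal A$-constraint $\sum_{i=1}^K\sum_{j=1}^{a_i}I_{\mathcal S_{i,j}}(m)=a_m$ for all $m$, set $c_{i,j}\triangleq|\{\ell\in[a_i]:\mathcal S_{i,\ell}=\mathcal M_{i,j}\}|\in\mathbb Z^{\geq 0}$. Then $\sum_{j=1}^{2^K}c_{i,j}=a_i$, hence $\sum_j c_{m,j}=a_m$, and moreover $\sum_i\sum_j c_{i,j}I_{\mathcal M_{i,j}}(m)=\sum_i\sum_{j=1}^{a_i}I_{\mathcal S_{i,j}}(m)=a_m$, so both halves of the $\mathcal A_3$-constraint hold; since $\sum_j c_{i,j}H(Y_i\mid V_{\mathcal M_{i,j}^c})=\sum_{j=1}^{a_i}H(Y_i\mid V_{\mathcal S_{i,j}^c})$ the same inequality appears in the family defining $\mathcal A_3$.

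Conversely, for $\mathcal A_3\subseteq\mathcal A$: given nonnegative integers $c_{i,j}$ and $a_m$ with $a_m=\sum_j c_{m,j}=\sum_i\sum_j c_{i,j}I_{\mathcal M_{i,j}}(m)$, use the first equality ($\sum_j c_{i,j}=a_i$) to unfold the multiplicities into an ordered list $\mathcal S_{i,1},\dots,\mathcal S_{i,a_i}$ in which each $\mathcal M_{i,j}$ appears exactly $c_{i,j}$ times. The right-hand side is unchanged, and the second equality is exactly the $\mathcal A$-constraint $\sum_i\sum_{j=1}^{a_i}I_{\mathcal S_{i,j}}(m)=a_m$, so the same inequality appears in the family defining $\mathcal A$. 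Combining the two inclusions gives $\mathcal A_3=\mathcal A$.

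There is no serious obstacle — the argument is a re-indexing. The one point that needs a moment's care is that the $\mathcal A_3$-constraint contains \emph{two} equalities ($a_m=\sum_j c_{m,j}$ and $a_m=\sum_i\sum_j c_{i,j}I_{\mathcal M_{i,j}}(m)$) while the $\mathcal A$-constraint contains only one; the resolution is that in $\mathcal A$ the integer $a_m$ already plays a double role, being simultaneously the length of user $m$'s list (the upper limit of the inner sum) and, by the displayed constraint, the total number of occurrences of $m$ across all lists, so the single written $\mathcal A$-constraint is equivalent under the correspondence to the pair of $\mathcal A_3$-constraints.
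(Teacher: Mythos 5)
Your proof is correct and follows essentially the same route as the paper: both establish the re-indexing correspondence between the ordered-list encoding of $\mathcal A$ (via $\mathcal S_{i,j}$, $1\le j\le a_i$) and the multiplicity encoding of $\mathcal A_3$ (via $c_{i,j}=\sum_{q=1}^{a_i} I(\mathcal S_{i,q}=\mathcal M_{i,j})$ in one direction and unfolding the $c_{i,j}$ into a list in the other). Your explicit check that the constraints and the right-hand sides are preserved under this correspondence, including the observation that the single $\mathcal A$-constraint encodes both equalities of the $\mathcal A_3$-constraint, fills in details the paper leaves as ``easy to see.''
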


\begin{proof}[{Proof}]
In order to prove this, we need to show that we can express any inequalities of region ${\mathcal A}_3$ in terms of inequalities of region ${\mathcal A}$ and vice versa.
First, we consider an inequality $\sum_{i=1}^{K}{a_iR_i} \leq \sum_{i=1}^{K}{\sum_{j=1}^{a_i}{H(Y_i|V_{{\mathcal S}_{i,j}^c})}}$ in region ${\mathcal A}$. It is easy to see that this inequality can be found by setting following parameters of region ${\mathcal A}_3$
\begin{equation} \label{eq0141a}
\begin{aligned}
{\mathcal S}_{i,j}={\mathcal M}_{i,q}~~~\text{s.t.}~~ q=\text{arg}{\min}_{k}{ \sum_{l=1}^{k}{c_{i,l}} \geq j}.
\end{aligned}
\end{equation}

Similarly, consider an inequality $\sum_{i=1}^K{a_iR_i}\leq \sum_{i=1}^K{\sum_{j=1}^{2^K}{c_{i,j}H(Y_i|V_{{\mathcal M}_{i,j}^c})}}$ in achievable region ${\mathcal A}_3$. This inequality can be obtained by setting following parameters
\begin{equation} \label{eq011a}
\begin{aligned}
c_{i,j}=\sum_{q=1}^{a_i}{I({\mathcal S}_{i,q}={\mathcal M}_{i,j})},
\end{aligned}
\end{equation}
where
\begin{equation} \label{eq011b}
{I({\mathcal A}={\mathcal B})}\triangleq \begin{cases} 1 &\mbox{if } {\mathcal A}={\mathcal B}, \\
0 & \mbox{otherwise. }
\end{cases}
\end{equation}
\end{proof}

\section{Converse} \label{converse}
Consider a coding scheme with rate-tuple $(R_1,...,R_K)$ in a $K$-user DIC, in particular encoder function $\Phi_i^n: {[1:2^{nR_i}]}\rightarrow {\mathcal X}_{S_i}^{n}$ and decoder function $\Psi_i^n: {\mathcal Y}_{D_i}^{n} \rightarrow {[1:2^{nR_i}]}$ for $i=1,...,K$, with vanishing error probability for sufficiently large $n$. Our goal is to show that there exists a product distribution $\prod_{i=1}^K{p(x_i)}$ such that (\ref{eq03a}) holds for all choices of $a_i$'s and ${\mathcal S}_{i,j}$'s.

According to Fano's inequality, we have
\small
\begin{equation} \label{converse_diff1}
\begin{aligned}
n\sum_{i=1}^K{a_iR_i}&\leq \sum_{i=1}^K{a_iI({W}_i;{Y}_i^n)}+n{\epsilon}_n
\\&{\leq}\sum_{i=1}^K{a_iI({X}_i^n;{Y}_i^n)} +n{\epsilon}_n\\&= \sum_{i=1}^K{\big(a_iH({Y}_i^n)-a_iH({Y}_i^n|{X}_i^n)\big)}+n{\epsilon}_n
\\&\overset{(a)}{=}\sum_{i=1}^K{\Big(a_iH({Y}_i^n)-a_i(\sum_{j\neq i}{H({V}_j^n)})\Big)}+n{\epsilon}_n
\\&\overset{(b)}{=}\sum_{i=1}^K{\Big(a_iH({Y}_i^n)-(L-a_i)(H({V}_i^n))\Big)}+n{\epsilon}_n
\\&\overset{(c)}{=}\sum_{i=1}^K{\sum_{q=1}^{a_i}{\big(H({Y}_i^n)-H({V}_{{\mathcal \phi}_{i,q}}^n)\big)}}+n{\epsilon}_n,
\end{aligned}
\end{equation}
\normalsize
where $W_i$ represents the message of $i$th source and step $(a)$ follows from (2) and the independence of $V_i$'s. Step $(b)$ can be derived by letting $L\triangleq \sum_{i=1}^K{a_i}$. Finally, step $(c)$ follows from the independence of $V_i$'s and introducing all subsets ${\mathcal \phi}_{i,q} \subseteq [K]$ such that $\sum_{i=1}^K{\sum_{q=1}^{a_i}{I_{{\mathcal \phi}_{i,q}}(m)=L-a_m}}$, $\forall m\in [K]$.

By letting ${\mathcal S}_{i,q}=[K]\backslash {\mathcal \phi}_{i,q}$, we can simplify (\ref{converse_diff1}) as follows
\small
\begin{equation} \label{converse_diff2}
\begin{aligned}
n&(\sum_{i=1}^K{a_iR_i}-{\epsilon}_n)&\leq\sum_{i=1}^K{\sum_{q=1}^{a_i}{\big(H({Y}_i^n)-H({V}_{{\mathcal S}_{i,q}^c}^n)\big)}},
\end{aligned}
\end{equation}
\normalsize
for all subsets ${\mathcal S}_{i,q}$'s satisfying $\sum_{i=1}^K{\sum_{q=1}^{a_i}{I_{{\mathcal S}_{i,q}}(m)=a_m}}$ for $m=1,...,K$. Note that $\sum_{i=1}^K{\sum_{q=1}^{a_i}{I_{{\mathcal S}_{i,q}}(m)}}=\sum_{i=1}^K{\sum_{q=1}^{a_i}{\Big(1-I_{{\mathcal \phi}_{i,q}}(m)\Big)=a_m}}$, $\forall m\in[K]$.

Considering inequality $H(Y)-H(X)\leq H(Y|X)$, which is due to $H(Y)-H(Y|X)=I(X;Y)\leq H(X)$, (\ref{converse_diff2}) can be simplified more as
\small
\begin{equation} \label{converse_diff3}
\begin{aligned}
n&(\sum_{i=1}^K{a_iR_i}-{\epsilon}_n)\leq\sum_{i=1}^K{\sum_{q=1}^{a_i}{H({Y}_i^n|{V}_{{\mathcal S}_{i,q}^c}^n)}}
\leq \sum_{m=1}^n{\sum_{i=1}^K{\sum_{q=1}^{a_i}{H({Y}_{i,m}|{V}_{{\mathcal S}_{i,q}^c,m})}}},
\end{aligned}
\end{equation}
\normalsize
where $Y_i^n\triangleq(Y_{i,1},...,Y_{i,n})$, ${V}_{{\mathcal S}_{i,q}^c}^n\triangleq({V}_{{\mathcal S}_{i,q}^c,1},...,{V}_{{\mathcal S}_{i,q}^c,n})$, and
all subsets ${\mathcal S}_{i,q}$'s satisfying $\sum_{i=1}^K{\sum_{q=1}^{a_i}{I_{{\mathcal S}_{i,q}}(m)=a_m}}$ for $m=1,...,K$.

By letting $n\rightarrow \infty$ and considering (\ref{converse_diff3}) for all $a_i$'s and ${\mathcal S}_{i,q}$'s as well as the convexity of $\mathcal A$, we can conclude that there exists a product distribution satisfying (4) for all choices of $a_i$'s and ${\mathcal S}_{i,q}$'s.
\section{Conclusion} \label{conclusion}
We considered the symmetric injective $K$-user deterministic interference channel and characterized the corresponding capacity region. The achievable rate region was obtained by projecting the achievable rate region of Han-Kobayashi scheme along the the direction of sum of common and private rates for each user. Furthermore, we derived a tight converse for the achievable rate region.
An interesting future direction can be characterizing the capacity region of general $K$-user deterministic interference channel.
\bibliographystyle{IEEEtran}
{\footnotesize
\bibliography{main_longer_arxiv_v1}
\appendices
\section{}\label{error_com_priv}
In this part, we analyze the error probability of $K$-user DIC. Since the analysis is the same for Receiver $i$ for $i=2,...,K$, we only analyze the error probability at Receiver 1. Furthermore, due to symmetry in generating the codewords, the average error does not depend on which message is sent. Therefore, we can assume message indexed by $(c_i,p_i)=(1,1)$ is sent by Transmitter $i$.

An error occurs if either the wrong codewords of Transmitter 1 are jointly typical with the received sequence or the correct codeword is not jointly typical with the received sequence. Let us define following events
\small
\begin{equation} \label{eqaap1}
\begin{aligned}
\emph{E}_{c_1,p_1,c_2,...,c_K}=&\{(X_1^n(c_1,p_1),V_1^n(c_1),V_2^n(c_2),...,V_K^n(c_K),Y_1^n) \in \emph{A}_{\epsilon}^n\}.
\end{aligned}
\end{equation}
\normalsize

Therefore, the error probability would be
\small
\begin{equation} \label{prb1}
\begin{aligned}
P_e&=P(\emph{E}_{1,1,1,...,1}^c~\bigcup~\cup_{(c_1,p_1)\neq(1,1)}{\emph{E}_{c_1,p_1,c_2,...,c_K}})
\\&< P(\emph{E}_{1,1,1,...,1}^c)+\sum_{(c_1,p_1)\neq(1,1)}{P(\emph{E}_{c_1,p_1,c_2,...,c_K})}
\\&=P(\emph{E}_{1,1,1,...,1}^c)+\sum_{c_1\neq1,p_1\neq1}{P(\emph{E}_{c_1,p_1,c_2,...,c_K})}
+\sum_{c_1\neq1,p_1=1}{P(\emph{E}_{c_1,p_1,c_2,...,c_K})}+\sum_{c_1=1,p_1\neq1}{P(\emph{E}_{c_1,p_1,c_2,...,c_K})}.
\end{aligned}
\end{equation}
\normalsize
Let us define function $\eta : {\mathbb{N}}\rightarrow \{0,1\}$ as follows
\small
\begin{equation} \label{etaaaa}
\eta(x)=\begin{cases} 1 &\mbox{if } x =1 \\
0 & \mbox{otherwise. }
\end{cases}
\begin{aligned}
\end{aligned}
\end{equation}
\normalsize
By indexing all possible $(\eta(c_2),...,\eta(c_K))$ and considering the $m$th index, we define $\gamma_m\triangleq(c_2,...,c_K)$ for all $m \in[2^{K-1}]$, hence we rewrite (\ref{prb1}) as follows
\small
\begin{equation} \label{eqaap12}
\begin{aligned}
\\&p_e\leq P(\emph{E}_{1,1,1,...,1}^c)+\underbrace{\sum_{c_1\neq1,p_1\neq1}{\sum_{m=1}^{2^{K-1}}{P(\emph{E}_{c_1,p_1,\gamma_m})}}}_{A},
+\underbrace{\sum_{c_1\neq1,p_1=1}{\sum_{m=1}^{2^{K-1}}{P(\emph{E}_{c_1,p_1,\gamma_m})}}}_{B}+\underbrace{\sum_{c_1=1,p_1\neq1}{\sum_{m=1}^{2^{K-1}}{P(\emph{E}_{c_1,p_1,\gamma_m})}}}_{C}.
\end{aligned}
\end{equation}
\normalsize
As $n$ goes to infinity, the first term goes to zero. Each term of $A$ goes to zero if the following condition is hold
\small
\begin{equation} \label{eqA}
\begin{aligned}
R_{1p}+R_{1c}+\sum_{k\in \tau_{m}}{R_{kc}}&\leq I(X_1,V_1,V_{\tau_{m}};Y|V_{[2:K]\backslash \tau_m})\\&=H(Y_1|V_{[2:K]\backslash \tau_m}),
\end{aligned}
\end{equation}
\normalsize
for $m=1,...,2^{K-1}$ where $\tau_m=\{i|j_i\neq1~\text{for}~i=2,...,K\}$. Similarly, each term of $B$ and $C$ goes to zero if the following constraints are hold
\small
\begin{equation} \label{eqBC}
\begin{aligned}
R_{1c}+\sum_{k\in \tau_{m}}{R_{kc}}&\leq I(X_1,V_1,V_{\tau_{m}};Y|V_{[2:K]\backslash \tau_m})\\&=H(Y_1|V_{[2:K]\backslash \tau_m}),
\\\sum_{k\in \tau_{m}}{R_{kc}}&\leq I(X_1,V_{\tau_{m}};Y|V_1,V_{[2:K]\backslash \tau_m})\\&=H(Y_1|V_1V_{[2:K]\backslash \tau_m}),
\end{aligned}
\end{equation}
\normalsize
for $m=1,...,2^{K-1}$. By removing redundant bounds of (\ref{eqA}) and (\ref{eqBC}), we have
\small
\begin{equation} \label{eqBC2}
\begin{aligned}
R_{1p}+\sum_{k\in \tau}{R_{kc}}&\leq H(Y_1|V_{{\tau}^c}),~~\forall \tau \subseteq \{1,...,K\},
\\R_{kp},R_{kc}&\geq 0,~~\forall k\in[K].
\end{aligned}
\end{equation}
\normalsize

By considering the error probability of all receivers and indexing all subsets of $\{1,...,K\}$, we have
\small
\begin{equation} \label{eq-achiv-rc-rp}
\begin{aligned}
R_{ip}+\sum_{k\in {\mathcal M}_{i,j}}{R_{kc}}&\leq H(Y_i|V_{{\mathcal M}_{i,j}^c}),~\forall i\in[K],~\forall j\in[2^K],
\\R_{kp},R_{kc}&\geq 0,~~\forall k\in[K],
\end{aligned}
\end{equation}
\normalsize
where ${\mathcal M}_{i,j}\subseteq[K]$ represents the $j$th subset of the corresponding $i$th receiver.

It should be noted one can easily verify that (\ref{eq-achiv-rc-rp}) matches the region found in Appendix I of \cite{jaafaar} for the case of $K=3$.


\section{}\label{app_proj_min}
\begin{lem}
\label{proj_min}
If $\sum_{i=1}^K{\sum_{j=1}^{2^K}{{c}_{i,j}(R_{ip}+\sum_{k \in {\mathcal M}_{i,j}}{R_{kc}})}}~~\leq \sum_{i=1}^K{\sum_{j=1}^{2^K}{{c}_{i,j}H(Y_i|V_{{\mathcal M}_{i,j}^c})}}$, then the projection of this inequality according to transformation matrix $\bf A$ would result in
\small
\begin{equation} \label{min_ineq_lemma}
\begin{aligned}
\sum_{i=1}^K{\min(d_i,e_i)R_{i}}\leq \sum_{i=1}^K{\sum_{j=1}^{2^K}{{c}_{i,j}H(Y_i|V_{{\mathcal M}_{i,j}^c})}}.
\end{aligned}
\end{equation}
\normalsize

\end{lem}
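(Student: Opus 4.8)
The plan is to recognize the matrix $\mathbf A$ as the change of variables $R_i=R_{ip}+R_{ic}$ and to carry out the one–variable elimination of the private/common split, while retaining the implicit nonnegativity constraints $R_{ip},R_{ic}\ge 0$ that are part of the description of $\mathcal A_1$. By (\ref{generalnnw}) the left-hand side of the hypothesis equals $\sum_{m=1}^{K}d_mR_{mp}+\sum_{m=1}^{K}e_mR_{mc}$, with $d_m,e_m\ge 0$ as in (\ref{eq2conditions2}), so the inequality to be projected is $\sum_{i=1}^{K}d_iR_{ip}+\sum_{i=1}^{K}e_iR_{ic}\le C$, where $C\triangleq\sum_{i=1}^{K}\sum_{j=1}^{2^K}c_{i,j}H(Y_i|V_{\mathcal M_{i,j}^c})$ denotes the common right-hand side.

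First I would substitute $R_{ic}=R_i-R_{ip}$ for each $i$; then $R_{ip}\ge 0$ and $R_{ic}\ge 0$ translate into $0\le R_{ip}\le R_i$, and the inequality becomes $\sum_{i=1}^{K}(d_i-e_i)R_{ip}\le C-\sum_{i=1}^{K}e_iR_i$. A tuple $(R_1,\dots,R_K)$ with $R_i\ge 0$ lies in the $\mathbf A$-projection of the set cut out by the hypothesis inequality together with $R_{ip},R_{ic}\ge 0$ precisely when some choice of $R_{ip}\in[0,R_i]$ satisfies this, i.e.\ precisely when the minimum of the linear form $\sum_i(d_i-e_i)R_{ip}$ over the box $\prod_i[0,R_i]$ is at most $C-\sum_i e_iR_i$.

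Next I would observe that this minimization separates over $i$, and that $\min_{R_{ip}\in[0,R_i]}(d_i-e_i)R_{ip}$ is attained at an endpoint: it is $0$ when $d_i\ge e_i$ (take $R_{ip}=0$) and $(d_i-e_i)R_i$ when $d_i<e_i$ (take $R_{ip}=R_i$); since $R_i\ge 0$, in either case it equals $\bigl(\min(d_i,e_i)-e_i\bigr)R_i$. Summing over $i$, the feasibility condition reads $\sum_{i=1}^{K}\bigl(\min(d_i,e_i)-e_i\bigr)R_i\le C-\sum_{i=1}^{K}e_iR_i$, which rearranges exactly to $\sum_{i=1}^{K}\min(d_i,e_i)R_i\le C$, that is (\ref{min_ineq_lemma}). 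The reverse inclusion is immediate: given $R_i\ge 0$ with $\sum_i\min(d_i,e_i)R_i\le C$, setting $R_{ip}=R_i,\,R_{ic}=0$ when $d_i\le e_i$ and $R_{ip}=0,\,R_{ic}=R_i$ otherwise attains the minimum and hence satisfies the original inequality, so the projection is exactly the half-space (\ref{min_ineq_lemma}) (intersected with $R_i\ge 0$).

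The step that requires the most care is not the arithmetic but the bookkeeping of which constraints survive: the hypothesis inequality by itself, without $R_{ip},R_{ic}\ge 0$, has a normal that is not orthogonal to the elimination directions and therefore projects onto the entire aggregate-rate space, so it is the retained nonnegativity constraints that produce the nontrivial bound, and one must be explicit that they are carried through the elimination. Once that is set up, the remainder reduces to the elementary endpoint optimization above.
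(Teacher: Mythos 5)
Your proof is correct and is essentially the paper's argument in a slightly more conceptual packaging: the paper performs explicit Fourier--Motzkin elimination of $R_{jc}$ and then $R_{jp}$ (retaining $R_{ip},R_{ic}\ge 0$ and $R_{ip}+R_{ic}=R_i$), which for each coordinate reduces to exactly the endpoint optimization of $(d_i-e_i)R_{ip}$ over $[0,R_i]$ that you carry out directly. Your explicit remark that the nonnegativity constraints must be carried through the elimination (and your verification of the reverse inclusion) is a point the paper leaves implicit, but the substance of the two proofs is the same.
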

\begin{proof}[{Proof}]
The proof is based on Fourier-Motzkin elimination by considering $\sum_{i=1}^K{d_iR_{ip}+e_iR_{ic}} \leq \sum_{i=1}^K{\sum_{j=1}^{2^K}{{c}_{i,j}H(Y_i|V_{{\mathcal M}_{i,j}^c})}}$ and inequalities $R_{ip}+R_{ic}\leq R_i$, $R_{ip}+R_{ic}\geq R_i$, $R_{ic}\geq 0~,~R_{ip}\geq 0$ for $\forall i\in [K]$. Without loss of generality, we assume that $d_j\geq e_j$ for a specific $j \in [K]$. We first eliminate $R_{jc}$ as follows
\small
\begin{equation} \label{proofmin_kochik}
\begin{aligned}
    \begin{array}{cc}
        R_j-R_{jp}\\
        0
    \end{array}
    \Bigr\}\leq R_{jc}\leq \Bigl\{
    \begin{array}{cc}
        R_j-R_{jp}\\
        {{1}\over{e_j}}(\theta-d_jR_{jp}-\sum_{i\neq j}{d_iR_{ip}+e_iR_{ic}})
    \end{array}
\end{aligned}
\end{equation}
\normalsize
where $\theta\triangleq \sum_{i=1}^K{\sum_{j=1}^{2^K}{{c}_{i,j}H(Y_i|V_{{\mathcal M}_{i,j}^c})}}$. We next eliminate $R_{jp}$ as follows
\small
\begin{equation} \label{proofmin_kochik}
\begin{aligned}
    0\leq R_{jp}&\leq \Bigl\{
    \begin{array}{cc}
        R_j\\
        {{1}\over{d_j}}(\theta-\sum_{i\neq j}{d_iR_{ip}+e_iR_{ic}})
    \end{array}
\\(d_j-e_j)R_{jp}&\leq \theta-e_jR_j-\sum_{i\neq j}{d_iR_{ip}+e_iR_{ic}}
\end{aligned}
\end{equation}
\normalsize
therefore, after eliminating $R_{jc}$ and $R_{jp}$, we obtain inequalities $\sum_{i\neq j}{d_iR_{ip}+e_iR_{ic}}\leq \theta-\min{(d_j,e_j)}R_j$ and $R_j\geq0$. Note that we first eliminate $R_{jp}$ then $R_{jc}$ in the case of $e_j\geq d_{j}$. By performing the similar technique to the resulting inequalities and eliminating the remaining common and private rates, we obtain (\ref{min_ineq_lemma}).
\end{proof}

\onecolumn
\section{}\label{Table_3user}

\begin{table}[ht]
\centering 
\begin{tabular}{|c|c|c|} 
\hline 
$a_i$'s & ${\mathcal S}_{i,j}$'s & Bound  \\ 
\hline 
$a_1=1$&${\mathcal S}_{1,1}=\{1\}$& (10) of \cite{jaafaar}\\ 
$a_1=1,a_2=1$&${\mathcal S}_{1,1}=\{\},{\mathcal S}_{2,1}=\{1,2\}$& (11) of \cite{jaafaar}\\ 
$a_1=1,a_2=1$&${\mathcal S}_{1,1}=\{2\},{\mathcal S}_{2,1}=\{1\}$& (12) of \cite{jaafaar}\\
$a_1=2,a_2=1$&${\mathcal S}_{1,1}=\{1,2\},{\mathcal S}_{1,1}=\{\},{\mathcal S}_{2,1}=\{1\}$& (13) of \cite{jaafaar}\\ 
$a_1=1,a_2=1,a_3=1$&${\mathcal S}_{1,1}=\{2,3\},{\mathcal S}_{2,1}=\{1\},{\mathcal S}_{3,1}=\{\}$ &  (14) of \cite{jaafaar}\\
$a_1=1,a_2=1,a_3=1$&${\mathcal S}_{1,1}=\{2\},{\mathcal S}_{2,1}=\{3\},{\mathcal S}_{3,1}=\{1\}$ &  (15) of \cite{jaafaar}\\
$a_1=1,a_2=1,a_3=1$&${\mathcal S}_{1,1}=\{\},{\mathcal S}_{2,1}=\{2,3\},{\mathcal S}_{3,1}=\{1\}$ &  (16) of \cite{jaafaar}\\
$a_1=1,a_2=1,a_3=1$&${\mathcal S}_{1,1}=\{\},{\mathcal S}_{2,1}=\{\},{\mathcal S}_{3,1}=\{1,2,3\}$ &  (17) of \cite{jaafaar}\\

$a_1=2,a_2=1,a_3=1$&${\mathcal S}_{1,1}=\{1,2,3\},{\mathcal S}_{1,2}=\{\},{\mathcal S}_{2,1}=\{1\},{\mathcal S}_{3,1}=\{\}$ &  (18) of \cite{jaafaar}\\
$a_1=2,a_2=1,a_3=1$&${\mathcal S}_{1,1}=\{2,3\},{\mathcal S}_{1,2}=\{\},{\mathcal S}_{2,1}=\{1\},{\mathcal S}_{3,1}=\{1\}$ &  (19) of \cite{jaafaar}\\
$a_1=2,a_2=1,a_3=1$&${\mathcal S}_{1,1}=\{2\},{\mathcal S}_{1,2}=\{1,3\},{\mathcal S}_{2,1}=\{\},{\mathcal S}_{3,1}=\{1\}$ &  (20) of \cite{jaafaar}\\
$a_1=2,a_2=1,a_3=1$&${\mathcal S}_{1,1}=\{\},{\mathcal S}_{1,2}=\{1,2\},{\mathcal S}_{2,1}=\{3\},{\mathcal S}_{3,1}=\{1\}$ &  (21) of \cite{jaafaar}\\
$a_1=2,a_2=1,a_3=1$&${\mathcal S}_{1,1}=\{\},{\mathcal S}_{1,2}=\{1,2\},{\mathcal S}_{2,1}=\{1,3\},{\mathcal S}_{3,1}=\{\}$ &  (22) of \cite{jaafaar}\\
$a_1=2,a_2=1,a_3=1$&${\mathcal S}_{1,1}=\{\},{\mathcal S}_{1,2}=\{3\},{\mathcal S}_{2,1}=\{1\},{\mathcal S}_{3,1}=\{1,2\}$ &  (23) of \cite{jaafaar}\\
$a_1=2,a_2=1,a_3=1$&${\mathcal S}_{1,1}=\{\},{\mathcal S}_{1,2}=\{\},{\mathcal S}_{2,1}=\{1,2,3\},{\mathcal S}_{3,1}=\{1\}$ &  (24) of \cite{jaafaar}\\
$a_1=2,a_2=1,a_3=1$&${\mathcal S}_{1,1}=\{\},{\mathcal S}_{1,2}=\{\},{\mathcal S}_{2,1}=\{1,3\},{\mathcal S}_{3,1}=\{1,2\}$ &  (25) of \cite{jaafaar}\\

$a_1=3,a_2=1,a_3=1$&${\mathcal S}_{1,1}=\{\},{\mathcal S}_{1,2}=\{\},{\mathcal S}_{1,3}=\{1,2,3\},{\mathcal S}_{2,1}=\{1\},{\mathcal S}_{3,1}=\{1\}$ &  (26) of \cite{jaafaar}\\
$a_1=3,a_2=1,a_3=1$&${\mathcal S}_{1,1}=\{\},{\mathcal S}_{1,2}=\{\},{\mathcal S}_{1,3}=\{1,3\},{\mathcal S}_{2,1}=\{1\},{\mathcal S}_{3,1}=\{1,2\}$ &  (27) of \cite{jaafaar}\\

$a_1=2,a_2=2,a_3=1$&${\mathcal S}_{1,1}=\{1,2,3\},{\mathcal S}_{1,2}=\{2\},{\mathcal S}_{2,1}=\{\},{\mathcal S}_{2,2}=\{\},{\mathcal S}_{3,1}=\{1\}$ &  (28) of \cite{jaafaar}\\
$a_1=2,a_2=2,a_3=1$&${\mathcal S}_{1,1}=\{1,2,3\},{\mathcal S}_{1,2}=\{\},{\mathcal S}_{2,1}=\{\},{\mathcal S}_{2,2}=\{1\},{\mathcal S}_{3,1}=\{2\}$ &  (29) of \cite{jaafaar}\\
$a_1=2,a_2=2,a_3=1$&${\mathcal S}_{1,1}=\{1,2,3\},{\mathcal S}_{1,2}=\{\},{\mathcal S}_{2,1}=\{\},{\mathcal S}_{2,2}=\{\},{\mathcal S}_{3,1}=\{1,2\}$ &  (30) of \cite{jaafaar}\\
$a_1=2,a_2=2,a_3=1$&${\mathcal S}_{1,1}=\{2,3\},{\mathcal S}_{1,2}=\{\},{\mathcal S}_{2,1}=\{\},{\mathcal S}_{2,2}=\{1\},{\mathcal S}_{3,1}=\{1,2\}$ &  (31) of \cite{jaafaar}\\
$a_1=2,a_2=2,a_3=1$&${\mathcal S}_{1,1}=\{2\},{\mathcal S}_{1,2}=\{2\},{\mathcal S}_{2,1}=\{\},{\mathcal S}_{2,2}=\{1,3\},{\mathcal S}_{3,1}=\{1\}$ &  (32) of \cite{jaafaar}\\

$a_1=3,a_2=2,a_3=1$&${\mathcal S}_{1,1}=\{\},{\mathcal S}_{1,2}=\{\},{\mathcal S}_{1,3}=\{1,2,3\},{\mathcal S}_{2,1}=\{\},{\mathcal S}_{2,2}=\{1\},{\mathcal S}_{3,1}=\{1,2\}$ &  (33) of \cite{jaafaar}\\
$a_1=3,a_2=2,a_3=1$&${\mathcal S}_{1,1}=\{\},{\mathcal S}_{1,2}=\{\},{\mathcal S}_{1,3}=\{1,2,3\},{\mathcal S}_{2,1}=\{1\},{\mathcal S}_{2,2}=\{1\},{\mathcal S}_{3,1}=\{2\}$ &  (34) of \cite{jaafaar}\\
$a_1=3,a_2=2,a_3=1$&${\mathcal S}_{1,1}=\{\},{\mathcal S}_{1,2}=\{\},{\mathcal S}_{1,3}=\{2,3\},{\mathcal S}_{2,1}=\{1\},{\mathcal S}_{2,2}=\{1\},{\mathcal S}_{3,1}=\{1,2\}$ &  (35) of \cite{jaafaar}\\
$a_1=3,a_2=2,a_3=1$&${\mathcal S}_{1,1}=\{\},{\mathcal S}_{1,2}=\{\},{\mathcal S}_{1,3}=\{\},{\mathcal S}_{2,1}=\{1\},{\mathcal S}_{2,2}=\{1,2,3\},{\mathcal S}_{3,1}=\{1,2\}$ &  (36) of \cite{jaafaar}\\

$a_1=4,a_2=2,a_3=1$&${\mathcal S}_{1,1}=\{\},{\mathcal S}_{1,2}=\{\},{\mathcal S}_{1,3}=\{\},{\mathcal S}_{1,4}=\{1,2,3\},{\mathcal S}_{2,1}=\{1\},{\mathcal S}_{2,2}=\{1\},{\mathcal S}_{3,1}=\{1,2\}$ &  (37) of \cite{jaafaar}\\

\hline 
\end{tabular}
\label{table-3user} 
\caption{Table II: $a_i$'s and ${\mathcal S}_{i,j}$'s for the case $K=3$}
\end{table}
\twocolumn

\end{document}